\documentclass[11pt]{amsart}
\usepackage{fullpage}
\usepackage{parskip}
\usepackage{array}
\setlength{\parindent}{0pt}
\usepackage{xcolor}
\usepackage{enumitem} 
\usepackage{dcolumn}

\usepackage{amsmath, amssymb, amsthm}
\usepackage[foot]{amsaddr}

\newtheorem{thm}{Theorem}[section]
\newtheorem{lem}[thm]{Lemma}

\newtheorem{rem}[thm]{Remark}

\numberwithin{thm}{section}
\numberwithin{equation}{section}
\numberwithin{figure}{section}

\usepackage[colorlinks=true, linkcolor=red, citecolor=blue, urlcolor=cyan]{hyperref}

\usepackage{multirow, multicol, algorithm, algorithmic, tcolorbox}
\usepackage[mathic]{mathtools}
\tcbuselibrary{skins, theorems, breakable}

\usepackage{graphicx}
\usepackage{tikz}
\usepackage{pgfplots}
\usetikzlibrary{arrows.meta}
\usepgfplotslibrary{groupplots}
\pgfplotsset{compat=newest}

\usepackage[font=small, labelfont=bf]{caption}

\newcommand{\ie}{{\it i.e.}}

\providecommand{\sech}{}
\renewcommand{\sech}{\operatorname{sech}}

\newcommand{\R}{\mathbb R}
\newcommand{\C}{\mathbb C}
\newcommand{\sgn}{\operatorname{sgn}}


\usepackage[
  backend=biber,
  date=year,
  giveninits=true,
  sorting=nyt,
  natbib=true,
  maxcitenames=2,
  maxbibnames=10,
  url=false,
  doi=true,
  backref=false
]{biblatex}
\addbibresource{refs-final.bib}
\renewbibmacro{in:}{\ifentrytype{article}{}{\printtext{\bibstring{in}\intitlepunct}}}

\definecolor{limegreen}{RGB}{50,205,50}

\DeclareRobustCommand{\greenCirc}{%
  \tikz[baseline=-0.6ex]{%
    \filldraw[limegreen, draw=black, thin] (0,0) circle (0.6ex);}}

\DeclareRobustCommand{\blueX}{%
  \tikz[baseline=-0.6ex]{%
    \filldraw[white, draw=black, thin] (0,0) circle (0.6ex);
    \draw[blue, line width=0.8pt]
      (-0.5ex,-0.5ex) -- (0.5ex,0.5ex)
      (-0.5ex,0.5ex) -- (0.5ex,-0.5ex);}}
\DeclareRobustCommand{\blueXn}{%
  \tikz[baseline=-0.6ex]{%
    \draw[blue, line width=0.8pt]
      (-0.5ex,-0.5ex) -- (0.5ex,0.5ex)
      (-0.5ex,0.5ex) -- (0.5ex,-0.5ex);}}


\title{Resonance-induced nonlinear bound states}

\author{
Jackson C. Turner\textsuperscript{1} \and
Michael I. Weinstein\textsuperscript{1,2}
}

\thanks{\textsuperscript{1}Department of Applied Physics and Applied Mathematics, Columbia University, New York, NY}
\thanks{\textsuperscript{2}Department of Mathematics, Columbia University, New York, NY}

\email{jt3287@columbia.edu}
\email{miw2013@columbia.edu}

\date{\today}

\begin{document}
\begin{abstract}
We study nonlinear bound states --- time-harmonic and spatially decaying ($L^2$) solutions --- of the nonlinear Schr\"odinger / Gross--Pitaevskii equations (NLS/GP)  with a compactly supported linear potential. Such solutions are known to bifurcate from the $L^2$ bound states of an underlying  Schr\"odinger operator $H_V=-\partial_x^2+V$.  
In this article we prove an extension of this result: for the 1D NLS/GP,  nonlinear bound states also arise via bifurcation from the scattering resonance states  and transmission resonance states of $H_V$, associated with the poles and zeros, respectively, of the reflection coefficients, $r_\pm(k)$, of $H_V$.  The corresponding resonance states are non-decaying and only $L^2_{\rm loc}$. In contrast to nonlinear states arising from $L^2$ bound states of $H_V$, these resonance bifurcations initiate at a strictly positive $L^2$ threshold which is determined by the position of the complex scattering resonance pole or transmission resonance zero. 
\end{abstract}
\maketitle

\section{Introduction}

In this article we study bifurcations of nonlinear bound states of the focusing 1D nonlinear Schr\"odinger / Gross--Pitaevskii  equation (NLS/GP)
\begin{equation}\label{eq:time-dep-nlse}
    i \partial_t \Psi  = - \partial^2_x \Psi + V(x) \Psi -  |\Psi|^2 \Psi,
\end{equation} 
governing the evolution of the complex valued function:  $\Psi:(x,t)\in \mathbb{R}_x\times\mathbb{R}_t\mapsto \Psi(x,t)$. The linear potential $V(x)$ is real-valued and is  assumed to be compactly supported. We refer to  $H_V=-\partial_x^2+V(x)$  as the underlying linear Schr\"odinger operator.  PDEs of the type \eqref{eq:time-dep-nlse} are of central importance in the modeling of phenomena in nonlinear optics and plasma physics, where $\Psi$ plays of the role of a slowly varying envelope of a highly oscillatory electric field \cite{sulemsulem1999,fibich2015nonlinear}, and in many-body quantum systems, where $\Psi$ is used to construct the quantum many-body wave function in the mean field limit \cite{esy2010}.

It is common to consider more general focusing nonlinearities
of the form $-f(|\Psi|^2)$, such as the homogeneous power-law
family $-|\Psi|^{2\sigma}\Psi$,  with $\sigma>0$. The choice $\sigma_c=2$  is the $L^2(\mathbb R)$-critical  case \cite{sulemsulem1999,cazenave2003semilinear,fibich2015nonlinear}. In this work we treat the \emph{subcritical case} $\sigma = 1$ in detail, and indicate how the arguments extend to general~$\sigma$ and~$f$.

Nonlinear bound states are time-harmonic and spatially localized solutions of \eqref{eq:time-dep-nlse}. These play a role in the localization and transport of energy, and it is hence of great interest to understand the nonlinear bound states which arise for a given linear potential, $V$.

For initial data $ \Psi(x,0)=\Psi_0(x)\in H^1(\mathbb R)$,
 \eqref{eq:time-dep-nlse}  has a unique solution $\Psi(x,t)\in C(\mathbb R_t;H_x^1(\mathbb R))$; see for example \cite{sulemsulem1999,cazenave2003semilinear,oh1989cauchy}. Further, the following two functionals are time-invariant on solutions of NLS / GP: \begin{equation}\begin{aligned}\label{eq:conservation}
&\mathcal{H}[\Psi](t) \equiv \int_{\mathbb R} \Big(\ |\partial_x \Psi(x,t)|^2\ +\ V(x)|\Psi(x,t)|^2\ -\ \frac 12 |\Psi(x,t)|^4 \, \ \Big)\ dx\ =\ \mathcal{H}[\Psi](0)  ,\\
&\mathcal N[\Psi](t) \equiv \int_{\mathbb R}  | \Psi(x,t) |^2 \, dx\ =\ \mathcal N[\Psi](0) .
\end{aligned}\end{equation}
$\mathcal{H}$ is the Hamiltonian for NLS / GP, arising since \eqref{eq:time-dep-nlse} has a time-translation invariant Lagrangian, \ie\ equation
 \eqref{eq:time-dep-nlse} may be expressed as  $i\partial_t\Psi = \delta\mathcal{H}[\Psi,\overline\Psi]/\delta\overline\Psi$. The time-invariance of  $\mathcal{N}$  arises from the Lagrangian being invariant with respect to $\Psi\mapsto e^{i\theta}\Psi$.

\subsection{Nonlinear bound states}
Nonlinear bound states or nonlinear standing waves of NLS / GP are time-harmonic solutions, $\Psi(x,t)=e^{-iEt}\psi(x)$ of \eqref{eq:time-dep-nlse}, and thus $\psi$ satisfies the nonlinear problem boundary value problem on $\mathbb R$,
\begin{align}
    \Big(-\partial_x^2 + V\Big)\psi -|\psi|^2\psi &= E\psi,\quad    \psi \in H^1(\mathbb R).
\label{eq:V-nls-full}\end{align}
We focus on real-valued solutions of \eqref{eq:V-nls-full}.

For the special case of $V=0$, \eqref{eq:V-nls-full} reduces to
\begin{align}
-\partial_x^2\psi -|\psi|^2\psi &= E\psi,\quad    \psi \in H^1(\mathbb R),
\label{eq:0-nls-full}\end{align}
the equation for the ``1-soliton'' profile, described as follows.
\begin{thm}[The 1-soliton of cubic NLS]\label{thm:1soliton}
 Fix any $E<0$.
\begin{enumerate}
  \item The nonlinear eigenvalue problem \eqref{eq:0-nls-full} has a unique positive and even solution,  $\mathcal{S}(x,E)$,  which is monotonically decreasing away from $x=0$:
  \begin{equation}\label{def:sech-thing}
    \mathcal{S}(x; E) \coloneqq \sqrt{-2 E} \, \sech \! \left(\sqrt{-E} \, x\right) .
\end{equation}
\item Any real-valued and decaying solution of \eqref{eq:0-nls-full} is of the form
 $\psi(x) = +\mathcal{S}(x-x_0,E)$ or 
  $\psi(x) = -\mathcal{S}(x-x_0,E)$
  for some $x_0\in\mathbb R$.
\item  $ \mathcal{N}\big[\mathcal{S}(\cdot,E)\big] =  \int_{\mathbb R}\mathcal{S}^2(z,E)\ dz\ =  4 \sqrt{-E} $.
    \end{enumerate}
\end{thm}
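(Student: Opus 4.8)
The plan is to reduce the semilinear problem to an autonomous second-order ODE and exploit its conserved ``energy'' through a phase-plane analysis. Writing $\mu=-E>0$ and restricting to real-valued $\psi$, equation \eqref{eq:0-nls-full} becomes $\psi''=\mu\psi-\psi^3$. Elliptic bootstrapping shows any $H^1(\mathbb{R})$ solution is smooth, and the embedding $H^1(\mathbb{R})\hookrightarrow C_0(\mathbb{R})$ forces $\psi(x)\to 0$ as $|x|\to\infty$; inserting this decay back into the equation also yields $\psi'(x)\to 0$. These facts pin down the behavior at infinity that the subsequent integration argument requires.

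Next I would multiply the ODE by $\psi'$ and integrate, producing the first integral $\tfrac12(\psi')^2-\tfrac{\mu}{2}\psi^2+\tfrac14\psi^4=C$, constant along every solution. The decay just established forces $C=0$, so each decaying solution lies on the zero level set
\[
(\psi')^2=\psi^2\Big(\mu-\tfrac12\psi^2\Big).
\]
This relation simultaneously caps the amplitude at $|\psi|\le\sqrt{2\mu}=\sqrt{-2E}$ and exhibits the phase portrait: the origin $(\psi,\psi')=(0,0)$ is a hyperbolic saddle of the first-order system $\psi'=w,\ w'=\mu\psi-\psi^3$, and the zero-energy set consists of two homoclinic loops to the origin, one in $\{\psi>0\}$ and one in $\{\psi<0\}$.

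For part (1), evenness forces $\psi'(0)=0$, and the zero-energy relation then gives $\psi(0)=\sqrt{2\mu}$ (the positive turning point), since $\psi(0)=0$ together with $\psi'(0)=0$ would force $\psi\equiv 0$ by uniqueness. The data $(\psi(0),\psi'(0))=(\sqrt{2\mu},0)$ determine the solution uniquely via Picard--Lindel\"of applied to the smooth vector field, and a direct substitution confirms $\mathcal{S}(x;E)=\sqrt{-2E}\,\sech(\sqrt{-E}\,x)$ solves the ODE with these initial conditions; monotone decay away from $x=0$ is read off from the sign of $\psi'$ on each half-line. For part (2), any nontrivial decaying solution attains an extremum of $|\psi|$ at some $x_0$ (by continuity and decay), where $\psi'(x_0)=0$ and hence $\psi(x_0)=\pm\sqrt{2\mu}$; translating the extremum to the origin and matching the sign reduces to the initial-value problem already solved, so $\psi(x)=\pm\mathcal{S}(x-x_0,E)$. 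Part (3) is the elementary evaluation $\int_{\mathbb{R}}\sech^2 u\,du=2$, which gives $\mathcal{N}[\mathcal{S}(\cdot,E)]=(-2E)\,(-E)^{-1/2}\cdot 2=4\sqrt{-E}$.

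The main obstacle is not the explicit construction but the rigorous classification in part (2): one must verify that a decaying orbit does not reach the saddle at finite $x$ (which would otherwise permit gluing several bumps) and that each homoclinic loop is traversed exactly once. This is precisely where smoothness of the vector field and uniqueness for the initial-value problem do the essential work, excluding multi-bump profiles and any breakdown of the single-peak structure; the translation parameter $x_0$ then accounts for the full one-parameter family on each loop.
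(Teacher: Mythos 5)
Your proof is correct: the paper states Theorem~\ref{thm:1soliton} without proof, as a classical fact, and your argument --- reduce to the autonomous ODE $\psi''=\mu\psi-\psi^3$ with $\mu=-E$, derive the zero-energy first integral $(\psi')^2=\psi^2\bigl(\mu-\tfrac12\psi^2\bigr)$ from the decay of $\psi$ and $\psi'$, and classify all solutions by ODE uniqueness at the point where $|\psi|$ attains its maximum --- is exactly the standard phase-plane proof that the paper implicitly invokes. Your closing concern about multi-bump gluing is in fact already dispatched by your own uniqueness step: the zero-energy relation forces $\psi'=0$ at any zero of $\psi$, so a nontrivial decaying solution never vanishes and, once $(\psi(x_0),\psi'(x_0))=(\pm\sqrt{2\mu},0)$ is known, Picard--Lindel\"of identifies $\psi$ globally with $\pm\mathcal{S}(\cdot-x_0,E)$, excluding any concatenated profiles.
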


\subsection{Nonlinear bound states via bifurcation} Note, by Theorem \ref{thm:1soliton}, as $E\uparrow 0$, $\mathcal{N}\big[\mathcal{S}(\cdot,E)\big] = 4 \sqrt{-E}$ decreases to zero. So one can schematically represent  the solutions of \eqref{eq:0-nls-full} in a plot of $\mathcal{N}$ vs.\ $E$, for $E<0$, as bifurcating from the zero solution, $\mathcal{N}[\psi]=0$, from energy $E=0$, at the endpoint (threshold) of the continuous spectrum of $-\partial_x^2$. 

To motivate the perspective of  this article, consider the linear Schr\"odinger operator, $H^\varepsilon=-\partial_x^2 + \varepsilon V(x)$.
If $V$ is a potential well ($V\le0$ and non-trivial), then for $\varepsilon$ is positive (no matter how small),
$H_V$ has a $L^2$ eigenstate with strictly negative energy in its point spectrum  \cite{simon1976bound}.
 The limiting (unperturbed) operator $H_0=-\partial_x^2$ has a threshold resonance at the bottom of its continuous spectrum, at $E=0$, and a corresponding uniformly bounded threshold resonance mode: $\psi(x)\equiv1$. For $\varepsilon>0$ and small, this threshold mode deforms into a localized ($L^2$) (slowly varying) mode of  $H^\varepsilon$ with energy of order $\varepsilon^2$ :
\[ \psi^\varepsilon(x)\sim 1\times \exp\Big(-c\ \varepsilon \ \lvert x \rvert \Big),\quad E^\varepsilon \sim -c^2\ \varepsilon^2. \]

Heuristically speaking, a soliton nonlinear bound  state is  self-induced by a shallow (self-consistent) nonlinear potential $-|\psi|^2$; the attractive nonlinear potential ``pulls'' the threshold resonance energy to strictly negative values.
In fact, numerical schemes for computing solitons are based on computing the linear ground state and  iteratively updating the self-consistent potential
\cite{pelinovsky2004convergence,petviashvili1976equation}.

The underlying linear operator of NLS/GP \eqref{eq:V-nls-full} is $H_V=-\partial_x^2 + V$. In \cite{rose1988bound} it was proved that nonlinear bound states of NLS/GP bifurcate from the discrete eigenvalues  (point spectrum) of $H_V$, and further that the family of nonlinear ground states are orbitally stable.  

The preceding is to indicate how nonlinear bound states may arise from scattering / spectral features of an underlying linear operator --- from a threshold resonance of $H_0=-\partial_x^2$ or from the point spectrum of $H_V$.  The article \cite{rose1988bound} speculated on the possibility  that spectral features of $H_V$  beyond $L^2$ eigenstates, e.g.\ resonances, may play a role in the nucleation of nonlinear bound states. This is the point of departure for the present article.

{\it We present rigorous results and numerical simulations on the bifurcation of   nonlinear bound states from:  bound state poles, 
scattering resonance poles, threshold resonances, and transmission resonances of the underlying linear operator $H_V$. Further, 
the branches of nonlinear bound states which are seeded by scattering resonance poles and transmission resonances exhibit a strictly positive $L^2$ (nonlinear) threshold; see schematic Figures \ref{fig:profiles-norm}, as well as Figures \ref{fig:boundstate}, \ref{fig:resonance}, and \ref{fig:transmission}.}

\subsection{An illustrative example}\label{sec:dirac_example}

Consider the  Dirac delta potential: $ V(x) = \alpha \delta(x)$, where $\alpha \in \mathbb{R}$.
The case $\alpha>0$ corresponds to a potential barrier and the case $\alpha<0$ a potential well.
  We construct nonlinear bound states of \eqref{eq:V-nls-full}. Away from $x=0$, the potential vanishes and solutions take the form of translated solitons:
\[
    \psi_E(x) = 
    \begin{cases} 
        \mathcal{S}(x - x_{\rm R}; E), & x > 0, \\
        \mathcal{S}(x - x_{\rm L}; E), & x < 0,
    \end{cases}
\]
where $E<0$. The centerings $x_{\rm R}$ and $x_{\rm L}$ are determined by a continuity condition on $\psi(x)$ and a jump condition  on $\psi^\prime$ at $x=0$:
\[
    x_{\rm R} = -x_{\rm L} = \frac{1}{\sqrt{|E|}} 
      \tanh^{-1}\!\left( \frac{\alpha}{2 \sqrt{|E|}} \right).
\]
The requirement that $x_{\rm L}$ and $x_{\rm R}$ take on  real values imposes the additional condition constraint on $E$: $|\alpha|/2\sqrt{|E|}<1$. Therefore $E< E_\star(\alpha) = -\alpha^2/4$. If $\alpha<0$ (potential well),  then as $E\uparrow E_\star$ we have $x_{\rm R}\to -\infty$ and $x_{\rm L}\to \infty$ (Figure  \ref{fig:profiles-norm} bottom plot).
And if $\alpha>0$ (potential barrier),  then as $E\uparrow E_\star(\alpha)$ we have $x_{\rm R}\to +\infty$ and $x_{\rm L}\to - \infty$ (Figure  \ref{fig:profiles-norm} top plot).

\begin{figure}[!ht]
    \centering
    \includegraphics[width=.45\linewidth, page = 1]{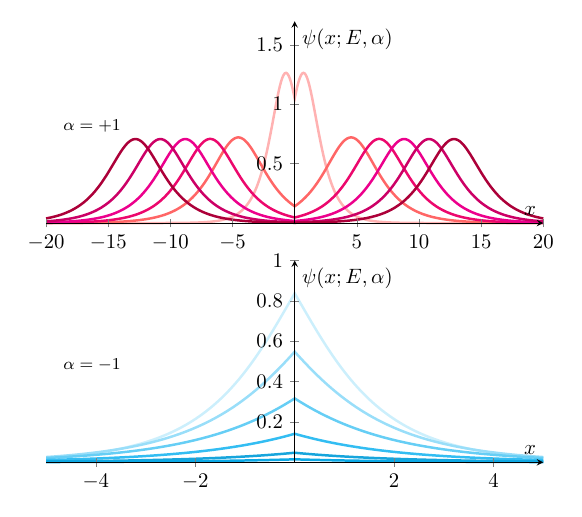}
    \includegraphics[width=.35\linewidth, page = 2]{Figures/delta_example.pdf}
\caption{
Profiles $\psi(x;E,\alpha)$ for $V=\alpha\delta(x)$ (left) and norms $\mathcal N[\psi_E]$ (right). 
Top: potential barrier ($\alpha>0$), with solutions shaded from light to deep red as $E\to E_\star$; the profiles approach the threshold logarithmically, producing constant spatial shifts. 
Bottom: potential well ($\alpha<0$), shaded from light to deep cyan. 
Right: the $(E,\mathcal N)$ branches terminate at $E_\star$, with an excitation threshold only for $\alpha>0$.
}
    
    \label{fig:profiles-norm}
\end{figure}

For $E\downarrow E_\star(\alpha)$,  we observe a strictly positive threshold $L^2$ norm in the case 
of a repulsive delta potential ($\alpha>0$)
and no such threshold for $\alpha<0$:
\begin{equation}
    \mathcal{N}[\psi_E] \;\to\;
    \begin{cases}
        8\sqrt{-E_\star(\alpha)}, & \alpha > 0, \\
        0, & \alpha < 0\, .
    \end{cases}
    \label{eq:ex-thresh}\end{equation}
As we shall see, this dichotomy can be understood as a bifurcation from a scattering resonance of $H_V$ when $\alpha>0$,  versus bifurcation from a bound state  of $H_V$ when $\alpha<0$. The analytical framework of bound states and resonances is reviewed in Section~\ref{sec:scattering} to set up for our main results, which are stated in Section~\ref{sec:bifurcations}.

\subsection{Outline of the article}

Section~\ref{sec:scattering} reviews basic scattering theory for $H_V$, in particular the notions of  bound state pole, scattering resonance poles, transmission resonances, and threshold resonances.  In Section~\ref{sec:NOGO} we prove a threshold-type result: when $H_V$ has neither a bound state nor a zero-energy resonance, nonlinear bound states are confined away from the axes in the $(E,\mathcal N)$ plane. In Section~\ref{sec:bifurcations} we state and prove the bifurcation of nonlinear bound states from different features of the linear scattering problem for $H_V$:
non-zero bound state poles in the upper half plane,
non-zero scattering resonance poles on the imaginary axis (also called anti-bound states), and non-zero transmission resonances on the imaginary axis. In Section~\ref{sec:bifurcation-character} we present schematics and numerical simulations detailing these different scenarios.
In Section \ref{sec:future} we provide a summary and discussion of future directions.

\subsection*{Acknowledgements} 
MIW acknowledges the very stimulating collaboration with Harvey Rose \cite{rose1988bound} during the 1980s, where the questions explored in the current work originate.  The authors thank Henri Berestycki,  Panayotis Kevrekidis, Eduard Kirr, Mikael Rechtsman and the research group of Sebastian Will for stimulating discussions. 
This work was supported in part by NSF grants: DMS-1908657, DMS-1937254 and DMS-2510769, and Simons Foundation Math+X Investigator Award \#376319. Part of this research was carried out during the 2023-24 academic year, when MIW was a Visiting Member in the School of Mathematics, Institute of Advanced Study, Princeton, supported by the Charles Simonyi Endowment, and a Visiting Fellow in the Department of Mathematics at Princeton University.

\section{Some scattering theory}\label{sec:scattering}

Here, we outline some basic scattering theory; see, for example, \cite{deift1979inverse,korotyaev2005inverse,reed1979iii,dyatlov2019mathematical}. We make the following assumptions on $V$:
\begin{enumerate}
    \item $V\in L^1(\mathbb R)$ and real-valued.
    \item ${\rm supp}(V) = [-b,b]$, where $b>0$.
\end{enumerate}

Outside the support of  $V$, the solutions of $(H_V-k^2)\psi=0$ are linear combinations of the exponentials $e^{ikx}$ and $e^{-ikx}$.
For $k\in\mathbb C$, the {\it Jost solutions}, $f_\pm(x,k)$, are defined to be the unique solutions 
of:
    \begin{align*}
        H_V f_+(x,k) &= k^2 f_+(x,k),\quad f_+(x,k)=e^{ikx},\quad \textrm{for $x>b$}\\
        H_V f_-(x,k) &= k^2 f_-(x,k),\quad f_-(x,k)=e^{-ikx},\quad \textrm{for $x<-b$}.
    \end{align*}

For $k\in\mathbb{C}\setminus\{0\}$, $f_-(x,k)$ and $f_-(x,-k)$ are linearly independent and hence there exist $b_-(k)$ and $a_-(k)$ such that:
\begin{equation} f_+(x,k) = b_-(k) f_-(x,k) + a_-(k) f_-(x,-k).  \label{eq:jost-f_+}
\end{equation}
And similarly, for $k\in\mathbb{C}\setminus\{0\}$
\begin{equation} f_-(x,k) = b_+(k) f_+(x,k) + a_+(k) f_+(x,-k).  \label{eq:jost-f_-}
\end{equation}
Relations for $f'_\pm(x,k)$ in terms of $f'_\mp(x,k)$ and $f'_\mp(x,-k)$ follow by differentiation. To solve for $b_\pm(k)$ and $a_\pm(k)$, first introduce the two ($x-$independent) Wronskian determinants: 
\begin{align}
    w(k) &\equiv f_-(x,k)\,f_+'(x,k)\;-\;f_-'(x,k)\,f_+(x,k)\quad  \textrm{and}\\   s_\pm(k) &\equiv
    \;f_+(x,\mp k)\,f_-'(x,\pm k)\;-\;f_+'(x,\mp k)\,f_-(x,\pm k).
\end{align}
We have:
 \begin{align}
     b_-(k) = \frac{s_-(k)}{2ik},\quad a_-(k) = \frac{w(k)}{2ik}\\
     b_+(k) = \frac{s_+(k)}{2ik},\quad a_+(k) = \frac{w(k)}{2ik}
 \end{align}
 \begin{equation} f_+(x,k) = \frac{s_-(k)}{2ik} f_-(x,k) + \frac{w(k)}{2ik} f_-(x,-k).  \label{eq:jost-f_+ab}
\end{equation}
And similarly, for $k\in\mathbb{C}\setminus\{0\}$
\begin{equation} f_-(x,k) = \frac{s_+(k)}{2ik} f_+(x,k) + \frac{w(k)}{2ik} f_+(x,-k).  \label{eq:jost-f_-ab}
\end{equation}

Since $V$ has compact support, for fixed $x$, the maps $k\mapsto f_\pm(x,k)$,  $s_\pm(k)$ and $r(k)$ are entire functions of $k\in\mathbb{C}$. 
By \eqref{eq:jost-f_+ab} and \eqref{eq:jost-f_-ab}, we have that $s_-(k) = 0$ iff $s_+(-k) = 0$. Furthermore,  $ s_-(k) = 0$ iff $f_+(x,k) = \frac{w(k)}{2ik} f_-(x,-k)$ and  $s_+(k) = 0$ iff  $f_-(x,k) =  \frac{w(k)}{2ik} f_+(x,-k)$. Hence,
 \[ \textrm{$s_-(k)$ and $w(k)$ cannot vanish together at any point in $\mathbb C\setminus\{0\}$},\]
 and similarly for $s_+(k)$ and $w(k)$.
 Below, we shall be interested in the zeros of $s_-(k)$ and $s_+(k)$.
Since $s_+(k)=s_-(-k)$, it suffices to restrict attention to $s_-(k)$.

Next, we introduce the transmission and reflection coefficients 
 \[ t(k) \equiv \frac{2ik}{w(k)} \quad {\rm and}\quad  \ r_\pm(k) \equiv \frac{s_\pm(k)}{w(k)} .
 \]
 From \eqref{eq:jost-f_+} we have that
 \begin{equation}
   f_-(x,-k) +   r_-(k) f_-(x,k)  =  t(k) f_+(x,k),\quad k\in\mathbb C\setminus\{0\}.
 \end{equation}
For $k\in\mathbb R\setminus\{0\}$, this relation encodes the scattering process of an incident plane wave $e^{ikx}$ in the region $x<-b$ inducing a refected wave
  $r_-(k)e^{-ikx}$ in this region and a transmitted wave
   $t(k) e^{ikx}$ in the region $x>a$.
From  \eqref{eq:jost-f_-}, we similarly have that 
 \begin{equation}
   f_+(x,-k) +   r_+(k) f_+(x,k)  =  t(k) f_-(x,k)
   ,\quad k\in\mathbb C\setminus\{0\}, 
 \end{equation}
 and the corresponding interpretation,  for $k\in\mathbb R\setminus\{0\}$, of the scattering of an incident wave $e^{ikx}$ from the region $x>a$. Finally, a consequence of the equation $H_V\psi=k^2\psi$ is the relation (conservation of probability):
 \begin{equation} \lvert r_\pm(k) \rvert ^2  + \lvert t(k) \rvert ^2 =1\quad \textrm{ for $k\in\mathbb{R}$.} \label{eq:r2t2}\end{equation}

\subsubsection*{Bound state poles and scattering resonance poles} Let $\mathbb C_+$ and $\mathbb C_-$ denote, respectively, the \underline{open} upper and lower half planes in $\mathbb C$. Consider now the zeros of $k\in\mathbb C_\pm\mapsto w(k)$.

\begin{enumerate}
    \item Let  $k_\star\in \mathbb C_+$ and $w(k_\star)=0$. By \eqref{eq:jost-f_+ab}, $f_+(x,k_\star) = \frac{s_-(k_\star)}{2ik_\star} f_-(x,k_\star)$ and is exponentially decaying as $x\to\pm\infty$. Hence, $f_+(x,k_\star)\in L^2(\mathbb R)$, and is therefore an eigenfunction of $H_V$ with  eigenvalue $k_\star^2$. By self-adjointness of $H_V$, $k_\star^2$ is real. Hence, $k_\star=i\kappa_\star$, with $\kappa_\star>0$ and the corrresponding eigenvalue of $H_V$ is $E=k_\star^2=-\kappa_\star^2<0$.
    \item If $k_\star\in \mathbb C_-$ and $w(k_\star)=0$, then $f_+(x,k_\star)$ solves $Hf_+(x,k_\star)=\kappa_\star^2f_+(x,k_\star)$, but $f_+(x,k_\star)$ is exponentially growing as $x\to + \infty$ and as $x\to - \infty$. These states are called {\it scattering resonance modes} or {\it quasi-normal modes}. 
\end{enumerate}

 Hence, using \eqref{eq:r2t2}, $k_\star\in{\mathbb C}_+\cup{\mathbb C}_- \cup{\{0\}} $ and $w(k_\star)=0$ if and only if there is a non-trivial solution, $\big(k_\star,\psi\big)$, of the eigenvalue problem for $H_V$ with {\it outgoing  radiation conditions} imposed at $x=\pm b$:
\begin{subequations}
\label{eq:outgoing-rad-prob}
\begin{align}
 H_V \psi &= k^2_\star \psi,\quad x\in\mathbb R\\
 \partial_x \psi &= ik_\star \psi,\quad x=b\\
  \partial_x \psi &= -ik_\star \psi,\quad x=-b.
 \end{align}
\end{subequations}
If $(k_\star,\psi)$ is a scattering resonance pair, then $k_\star\in\mathbb C_-$ may be on or off the imaginary axis. 
 Let $k_\star=i\kappa_\star$, with $\kappa_\star<0$, denote a scattering resonance pole, which falls on the negative imaginary axis. The associated exponentially growing (and non-oscillatory) state $f_+(x,k_\star)$ is called an {\it anti-bound state} or {\it virtual state} \cite{BreitWigner1936,Islam1966,LandauLifshitzQM,Newton1982,OhanianGinsburg1974}.

 Finally, we remark that the resolvent $(H_V-k^2)^{-1}$, as an operator on $L^2(\mathbb R)$, can be represented as an integral operator with Green's kernel $G_V(x,y,k)=g(x,y,k)/w(k)$:
\begin{equation} \label{eq:green1}
\bigl(H_V - k^2\bigr)^{-1}f(x)
= \int_{\mathbb R} G_V(x,y;k)\,f(y)\,dy,
\qquad
G_V(x,y;k) = \frac{1}{w(k)}
\begin{cases}
f_+(x,k)\,f_-(y,k), & x \ge y, \\[6pt]
f_+(y,k)\,f_-(x,k), & x < y.
\end{cases}
\end{equation}
  Here, for each fixed $x, y \in\mathbb{R}$, the mapping $k\mapsto g(x,y,k)$ is an entire function, and hence
   the Green's kernel, $k\mapsto G(x,y,k)$ 
 is meromorphic in the complex plane, with poles occurring precisely at the zeros of $w(k)$. 

 Since poles which occur in the upper half plane correspond to bound states of $H_V$, we call zeros of $w(k)$ in $\mathbb C_+$ {\it bound state poles}. Poles of the resolvent kernel which occur in the lower half plane, correspond to solutions of the outgoing radiation problem 
\eqref{eq:outgoing-rad-prob} and are called
 {\it scattering resonance poles}. 

 It may happen that $w(0)=0$. In this case, the boundary value problem reduces to $H_V\psi=0$
  with Neumann boundary conditions $\partial_x\psi(-b)=0=\partial_x\psi(b) $. In this case, we say that $E=k_\star^2=0^2=0$ is a {\it threshold resonance} and $\psi(x)$ is constant outside the support of $V$.

\subsubsection*{Transmission resonances} Now let $k_\star\in\mathbb{R}$ be such that  $s_-(k_\star)=0$. By \eqref{eq:jost-f_+ab}, 
 $f_+(x,k_\star) = \frac{w(k_\star)}{2ik_\star} f_-(x,-k_\star)$. By the conservation law \eqref{eq:r2t2} we have
  that $\lvert t(k_\star) \rvert = \big\lvert w(k_\star)/2ik_\star\big \rvert =1$. Hence, an incoming plane wave $e^{ik_\star x}$ of amplitude equal to $1$ interacts with the potential and transmits as a phase shift plane wave, $e^{i\theta}\ e^{ik_\star x} $, of amplitude equal to $1$. We therefore call $k_\star$ a {\it transmission resonance}.

 We have that  $k_\star\in\mathbb{C}$ and $s_-(k_\star)=0$ iff there is a non-trivial solution $(k_\star,\psi)$  of the (right) transmission problem with right transmission (non-reflecting) boundary conditions at $x=\pm b$:
 \begin{subequations}
\label{eq:right-trans-prob}
\begin{align}
 H_V \psi &= k^2_\star \psi,\quad x\in\mathbb R\\
 \partial_x \psi &= ik_\star \psi,\quad x=b\\
  \partial_x \psi &= ik_\star \psi,\quad x=-b.
 \end{align}
\end{subequations}

Similarly, $k_\star\in\mathbb{R}$ with  $s_+(k_\star)=0$ correspond to solutions of the left transmission problem,
left transmission (non-reflecting) boundary conditions at $x=-b$ and $x=b$.

We shall call the zeros of $s_-(k)$ {\it right transmission resonances} and those of $s_+(k)$ {\it left transmission resonances}. We shall apply this terminology to \underline{any} zero $k_\star\in\mathbb C$  of $s_-(k)$ or $s_+(k)$.

\section{\texorpdfstring{Constraints on $(E,\mathcal{N})$ if $H_V$ has no bound states or threshold resonances}
{Constraints on (E,N) if HV has no bound states or threshold resonances}}
\label{sec:NOGO}

Before turning to bifurcations in Section \ref{sec:bifurcations}, we prove a general constraint on nonlinear bound states $\big(E,\mathcal{N}[\psi_E]\big)$; see Figure~\ref{fig:nogo}.

\begin{thm}\label{thm:nogo}
Let $V\in L^1(\mathbb{R})$ have compact support, and let $H_V=-\partial_x^2+V(x)$
act in $L^2(\mathbb R)$.  
\begin{enumerate}
    \item If $H_V$ has no threshold resonance, then there exists a constant
    $E_{\mathrm{thr}}<0$, depending only on $V$, such that any nontrivial solution
    $(E,\psi)$ of \eqref{eq:V-nls-full} satisfies $E\le E_{\mathrm{thr}}$.
    \item If, in addition, $H_V$ has no bound state, then there exists
    $\mathcal N_{\mathrm{thr}}>0$, depending only on $V$, such that
    $\mathcal N[\psi]\ge \mathcal N_{\mathrm{thr}}$.
\end{enumerate}
\end{thm}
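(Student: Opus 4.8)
The plan is to prove part (1) by a contradiction-and-rescaling argument that manufactures a threshold resonance in the limit, and then to deduce part (2) from part (1) together with the energy identity and a Gagliardo--Nirenberg inequality. The engine for part (1) is the exterior ODE. Since $\psi\in H^1(\mathbb R)$ decays at infinity and solves $-\psi''-\psi^3=E\psi$ on $\{|x|>b\}$, the first integral $(\psi')^2=|E|\psi^2-\tfrac12\psi^4$ holds (the constant is forced to $0$ by decay). Evaluating at $x=\pm b$ gives two $V$-independent facts that drive everything: the amplitude bound $\psi(\pm b)^2\le 2|E|$, and the logarithmic-derivative bound
\[
\Big(\frac{\psi'(\pm b)}{\psi(\pm b)}\Big)^{2}=|E|-\tfrac12\psi(\pm b)^2\le |E|.
\]
In particular $E<0$ for any nontrivial state: if $E\ge 0$ the first integral forces $\psi\equiv0$ outside $[-b,b]$, hence everywhere by uniqueness.

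For part (1), suppose toward a contradiction that there is a sequence of nontrivial solutions $(E_n,\psi_n)$ with $E_n\uparrow 0$. The amplitude bound gives $\psi_n(\pm b)\to 0$ and $\psi_n'(\pm b)\to 0$, so by continuous dependence (Gronwall on the fixed interval $[-b,b]$, where the cubic is harmless for small data and $V\in L^1$ is integrated directly) the interior maximum $m_n:=\|\psi_n\|_{L^\infty[-b,b]}$ tends to $0$; note $m_n>0$, since zero Cauchy data at $b$ would force $\psi_n\equiv0$. Rescale $\chi_n:=\psi_n/m_n$ on $[-b,b]$, so that $-\chi_n''+V\chi_n=(E_n+m_n^2\chi_n^2)\chi_n$ with $\|\chi_n\|_{L^\infty[-b,b]}=1$ and coefficient $E_n+m_n^2\chi_n^2\to0$. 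The logarithmic-derivative bound gives $|\chi_n'(\pm b)|\le\sqrt{|E_n|}\,|\psi_n(\pm b)|/m_n\le\sqrt{|E_n|}\to0$. Since $\chi_n''=V\chi_n+o(1)$ is bounded in $L^1[-b,b]$, the family is precompact in $C^1[-b,b]$; a subsequential limit $\chi_\star$ satisfies $H_V\chi_\star=0$ on $[-b,b]$, $\|\chi_\star\|_{L^\infty[-b,b]}=1$ (hence $\chi_\star\not\equiv0$), and Neumann data $\chi_\star'(\pm b)=0$. Extending $\chi_\star$ by the constants $\chi_\star(\pm b)$ on $\{|x|>b\}$ yields a bounded, nonzero global solution of $H_V\chi_\star=0$ that is constant outside $\supp V$ --- exactly a threshold resonance ($w(0)=0$) --- contradicting the hypothesis. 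Hence $E_{\mathrm{thr}}:=\sup\{E:\ \eqref{eq:V-nls-full}\text{ has a nontrivial solution}\}<0$, and it depends only on $V$.

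For part (2), the additional absence of bound states makes $H_V\ge 0$. Pairing \eqref{eq:V-nls-full} with $\psi$ gives the energy identity $\langle H_V\psi,\psi\rangle=E\,\mathcal N[\psi]+\|\psi\|_{L^4}^4$, whose left side is nonnegative, so $\|\psi\|_{L^4}^4\ge |E|\,\mathcal N[\psi]\ge |E_{\mathrm{thr}}|\,\mathcal N[\psi]$ by part (1). On the other hand the same identity together with $\int V\psi^2\ge-\|V_-\|_{L^1}\|\psi\|_\infty^2$ and the one-dimensional inequalities $\|\psi\|_\infty^2\le\|\psi'\|_{L^2}\|\psi\|_{L^2}$ and $\|\psi\|_{L^4}^4\le C_{\mathrm{GN}}\|\psi'\|_{L^2}\|\psi\|_{L^2}^3$ bound $\|\psi'\|_{L^2}$ by a polynomial in $\mathcal N[\psi]^{1/2}$. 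Substituting back produces an inequality of the shape $|E_{\mathrm{thr}}|\le C_{\mathrm{GN}}^2\,\mathcal N[\psi]^2+C_{\mathrm{GN}}\|V_-\|_{L^1}\,\mathcal N[\psi]$, whose right-hand side vanishes at $\mathcal N=0$ and is strictly increasing; hence $\mathcal N[\psi]\ge\mathcal N_{\mathrm{thr}}>0$, with $\mathcal N_{\mathrm{thr}}$ the positive root, depending only on $V$.

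The main obstacle is the compactness in part (1): a priori, as $E_n\uparrow0$ the mass of $\psi_n$ could broaden and escape to spatial infinity as a spreading soliton, decoupling from $V$ and leaving only the trivial free limit. The device that resolves this is to normalize by the \emph{interior} maximum $m_n$ rather than the global maximum, and to exploit the uniform logarithmic-derivative bound at $x=\pm b$: this bound is independent of where the soliton is centered, and it forces Neumann data in the limit, so the rescaled interior profile always converges to a genuine threshold resonance. Verifying the $C^1$ precompactness and the passage to the limiting boundary conditions with only $V\in L^1$ is the step that will require the most care.
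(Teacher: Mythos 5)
Your proof is correct, and it differs from the paper's in instructive ways. For part (1) both arguments run the same contradiction --- a sequence $(E_n,\psi_n)$ with $E_n\uparrow 0$ is renormalized so that the limit is a nontrivial solution of $H_Vu=0$ with Neumann data at $x=\pm b$, i.e.\ a threshold resonance --- but the normalizations differ: the paper divides by the boundary value $\eta_n=\psi_n(-b)$ and appeals to elliptic estimates for an $H^1_{\mathrm{loc}}$ limit, which tacitly requires an a priori interior bound on $u_n=\psi_n/\eta_n$, whereas your normalization by the interior sup-norm $m_n$ gives $\|\chi_n\|_{L^\infty[-b,b]}=1$ by construction, and your logarithmic-derivative bound $|\psi_n'(\pm b)/\psi_n(\pm b)|\le\sqrt{|E_n|}$ (read off from the same exterior first integral $\mathfrak{E}[\psi](\pm b)=0$ that the paper uses as its nonlinear boundary condition) forces the Neumann data in the limit regardless of where the exterior soliton is centered; this makes the Arzel\`a--Ascoli step in $C^1[-b,b]$ elementary and actually firms up the step the paper compresses into ``elliptic estimates.'' For part (2) you take a genuinely different route: the paper argues again by contradiction, normalizing $\phi_n=\psi_n/\lVert\psi_n\rVert$, invoking Lemma~\ref{lem:bound-4} for a uniform $L^4$ bound, and concluding that a negative limit of the linear energy of normalized states would force $H_V$ to have a bound state; you instead use the absence of bound states directly as the operator inequality $H_V\ge 0$ (valid since $\sigma_{\mathrm{ess}}(H_V)=[0,\infty)$ for compactly supported $V\in L^1$), so the energy identity gives $\|\psi\|_{L^4}^4\ge |E_{\mathrm{thr}}|\,\mathcal N[\psi]$, and Gagliardo--Nirenberg plus the upper bound $\|\psi'\|\le \|V_-\|_{L^1}\mathcal N^{1/2}+C_{\mathrm{GN}}\mathcal N^{3/2}$ (divide by $\|\psi'\|\neq 0$, which is nonzero for nontrivial $L^2$ states) closes the loop in the form $|E_{\mathrm{thr}}|\le C_{\mathrm{GN}}\|V_-\|_{L^1}\mathcal N+C_{\mathrm{GN}}^2\mathcal N^2$. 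Your version buys an explicit, quantitative $\mathcal N_{\mathrm{thr}}$ (the positive root of that polynomial) with no compactness argument, and it makes transparent why the paper's remark restricts the positive mass threshold to $\sigma\le 2$: for general powers the Gagliardo--Nirenberg exponents only close this inequality in the subcritical/critical range. The two points you should write out in full are the ones you flagged: $m_n\to 0$ via Gronwall for the Carath\'eodory ODE on $[-b,b]$ (smallness of the Cauchy data $\psi_n(\pm b),\psi_n'(\pm b)\to 0$, furnished by the amplitude bound $\psi_n(\pm b)^2\le 2|E_n|$, propagates across the fixed interval since the Lipschitz coefficient $|V|+|E_n|+\psi_n^2$ is uniformly $L^1$ while $|\psi_n|\le 1$, with a continuation argument to keep the solution in that region), and the equicontinuity of $\chi_n'$, which follows from the uniform absolute continuity of $\int|V|$ for the single fixed $V\in L^1$.
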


\begin{rem}[Generalization to power nonlinearities]
The conclusions of Theorem~\ref{thm:nogo} extend to focusing
nonlinearities $-|\psi|^{2\sigma}\psi$ with analogous proofs, in the follow ways:
\begin{enumerate}
\item If $H_V$ has no threshold resonance, then for all $\sigma > 0$,
$E_{\mathrm{thr}}(V,\sigma)<0$.
\item When $H_V$ has neither a bound state nor a threshold resonance and
$\sigma\le2$ (the subcritical or critical case),
one also has $\mathcal N_{\mathrm{thr}}(V,\sigma)>0$, whereas for $\sigma>2$ (the supercritical case), $\mathcal N_{\rm thr}$ can be 0.
\end{enumerate}
These scenarios are illustrated in Figure~\ref{fig:nogo}. 
\end{rem}

\begin{figure}[!ht]
    \centering
    \includegraphics[width=0.47\linewidth,page=1]{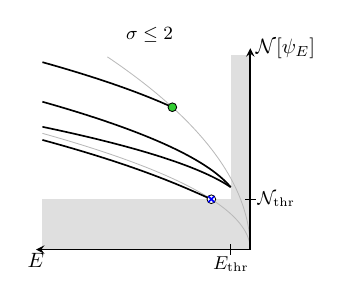}\hfill
    \includegraphics[width=0.47\linewidth,page=2]{Figures/nogo.pdf}
    \caption{
        Schematic illustration of Theorem~\ref{thm:nogo} for
        nonlinearities $-|\psi|^{2\sigma}\psi$. The shaded grey areas denote forbidden regions in the
        $\bigl(\mathcal N,E\bigr)$-plane.
        Left: subcritical and critical case $\sigma\le2$ when $H_V$ has no bound state or threshold resonance. Right: supercritical case $\sigma>2$ when $H_V$ has no threshold resonance.
        The symbols \protect{\greenCirc}\ and \protect{\blueX}\ mark branches bifurcating from
        \emph{scattering} and \emph{transmission} resonances of $H_V$
        on $i\mathbb{R}$; see 
        Theorem~\ref{thm:bifurcations}.
        Each bifurcating branch emerges from the faint gray ``free soliton guide curves''
        $E\mapsto \mathcal N[\mathcal S_{E,\sigma}]$ and
        $E\mapsto 2\,\mathcal N[\mathcal S_{E,\sigma}]$,
        where $\mathcal S_{E,\sigma}$ denotes the fundamental 1-soliton.
    }
    \label{fig:nogo}
\end{figure}

\begin{lem}\label{lem:bound-4}
Let $\{u_n\} \subset H^1(\mathbb R)$ be a sequence of solutions to
\begin{equation}\label{eq:uk}
    -u_n'' + V u_n - \eta_n \lvert u_n\rvert^2 u_n = E_n u_n
\end{equation}
for some sequence $E_n < 0$. Assume that
\[
    \lVert u_n\rVert_{L^2} = 1, 
    \qquad \eta_n \downarrow 0.
\]
Then $\{\|u_n\|_{L^4}\}$ is bounded.
\end{lem}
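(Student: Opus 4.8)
The plan is to derive a single a priori energy identity from \eqref{eq:uk} and then exploit the \emph{subcritical} scaling of the Gagliardo--Nirenberg inequality, so that the cubic term enters only linearly in $\|u_n'\|_{L^2}$ and is therefore dominated by the quadratic kinetic term. First I would pair \eqref{eq:uk} with $\overline{u_n}$, integrate over $\mathbb R$, and integrate by parts the second-derivative term. Since $\|u_n\|_{L^2}=1$, this yields
\begin{equation*}
  \|u_n'\|_{L^2}^2 \;+\; \int_{\mathbb R} V\,|u_n|^2\,dx \;-\; \eta_n\,\|u_n\|_{L^4}^4 \;=\; E_n .
\end{equation*}
Because $E_n<0$, I may discard the right-hand side to obtain the one-sided bound $\|u_n'\|_{L^2}^2 + \int V|u_n|^2 < \eta_n\|u_n\|_{L^4}^4$.

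Next I would control the two remaining terms uniformly in $n$. For the sign-indefinite potential term I would use $V\in L^1$ together with the one-dimensional Sobolev embedding $\|u\|_{L^\infty}^2 \le C\,\|u\|_{L^2}\|u'\|_{L^2}$, which gives $\bigl|\int V|u_n|^2\bigr| \le \|V\|_{L^1}\|u_n\|_{L^\infty}^2 \le C\|V\|_{L^1}\|u_n'\|_{L^2}$, where $\|u_n\|_{L^2}=1$ has been used. For the cubic term I would invoke the 1D Gagliardo--Nirenberg inequality $\|u\|_{L^4}^4 \le C_{\mathrm{GN}}\,\|u'\|_{L^2}\|u\|_{L^2}^3$, which with $\|u_n\|_{L^2}=1$ reduces to $\|u_n\|_{L^4}^4 \le C_{\mathrm{GN}}\|u_n'\|_{L^2}$.

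Substituting both estimates into the one-sided bound yields $\|u_n'\|_{L^2}^2 < \bigl(C\|V\|_{L^1} + \eta_n C_{\mathrm{GN}}\bigr)\|u_n'\|_{L^2}$. Since $\|u_n\|_{L^2}=1$ forces $u_n\neq0$, and no nonzero constant lies in $L^2(\mathbb R)$, we have $\|u_n'\|_{L^2}>0$ and may divide to get $\|u_n'\|_{L^2} < C\|V\|_{L^1} + \eta_n C_{\mathrm{GN}}$. As $\eta_n\downarrow0$ the sequence $\{\eta_n\}$ is bounded, so $\{\|u_n'\|_{L^2}\}$ is uniformly bounded; feeding this back into the Gagliardo--Nirenberg bound gives the claimed boundedness of $\{\|u_n\|_{L^4}\}$.

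The point to get right---rather than a genuine obstacle---is the exponent bookkeeping in Gagliardo--Nirenberg: in the subcritical regime $\sigma=1$ the cubic term scales like $\|u_n'\|_{L^2}^{1}$, strictly below the quadratic kinetic term, so the inequality closes even using only boundedness of $\{\eta_n\}$ rather than $\eta_n\to0$. This is precisely the structural feature that fails in the supercritical case $\sigma>2$; in the critical case $\sigma=2$ the nonlinear term would scale like $\|u_n'\|_{L^2}^2$, and one would then genuinely need the smallness $\eta_n\downarrow0$ to absorb it.
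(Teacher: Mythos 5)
Your proof is correct and uses essentially the same ingredients as the paper's: the energy identity obtained by pairing the equation with $\overline{u_n}$, the bound $\lvert\int V|u_n|^2\rvert \le \|V\|_{L^1}\|u_n\|_\infty^2 \lesssim \|u_n'\|_{L^2}$, and the 1D Gagliardo--Nirenberg estimate $\|u_n\|_{L^4}^4 \lesssim \|u_n'\|_{L^2}$, so that the kinetic term dominates quadratically over the linear terms. The only difference is presentational: the paper argues by contradiction (assuming $\|u_n\|_{L^4}\to\infty$ forces $E_n\to+\infty$, contradicting $E_n<0$), whereas you rearrange the same inequality directly to get the quantitative bound $\|u_n'\|_{L^2} \le C\|V\|_{L^1} + \eta_n C_{\mathrm{GN}}$ --- a slightly cleaner route to the same conclusion, and your side remark on the exponent bookkeeping at $\sigma=2$ correctly identifies where the argument would need $\eta_n\downarrow 0$ rather than mere boundedness.
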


\begin{proof}[Proof of Lemma~\ref{lem:bound-4}]
Our proof is by contradiction. Suppose  $\lVert u_n\rVert_{L^4}\to\infty$.  
Then, since
\[
    \lVert u_n\rVert_{L^4}^4 
    \leq C\, \lVert u_n'\rVert\, \lVert u_n\rVert^3 
    = C\,\lVert u_n'\rVert,
\]
we have  $\lVert u_n'\rVert\to\infty$.  
Further, from \eqref{eq:uk} we have the energy identity:
\begin{equation}\label{eq:energy}
    \lVert u_n'\rVert^2 
    + \int V u_n^2 \,dx 
    - \eta_n^2 \lVert u_n\rVert_{L^4}^4 
    = E_n \lVert u_n\rVert^2 
    = E_n.
\end{equation}
We next obtain a lower bound for the second term in \eqref{eq:energy} as follows.  
Since $\lVert u_n\rVert =1$ and 
$\lVert u_n\rVert_\infty^2 \le 2\,\lVert u_n\rVert\,\lVert u_n'\rVert = 2\,\lVert u_n'\rVert$, we have 
\[
    \int V u_n^2 \,dx 
    \geq - \lVert V\rVert_{L^1}\,\lVert u_n\rVert_\infty^2 
    \geq - 2\,\lVert V\rVert_{L^1}\,\lVert u_n'\rVert.
\]
Hence,
\[
    E_n 
    \geq \lVert u_n'\rVert^2 
    - \bigl(2\lVert V\rVert_{L^1}+2\eta_n^2\bigr)\lVert u_n'\rVert.
\]
Since $\{\eta_n\}$ is bounded, and the right-hand side tends to infinity, we conclude $E_n\to\infty$, a contradiction.  
Hence,  $\{\lVert u_n\rVert_{L^4}\}$ is bounded.
\end{proof}

\begin{proof}[Proof of Theorem~\ref{thm:nogo}]
If \eqref{eq:V-nls-full} has no solutions for the given $V$, the claim is immediate.  
Otherwise, let $H_V$ have no threshold resonance and let $(\psi_n,E_n)$ be a sequence of solutions with $E_n \uparrow 0$.  
On the left tail $\psi_n(x)=\mathcal S(x-y_n;E_n)$, so $\psi_n(-b)\to 0$.  
Writing $\psi_n=\eta_n u_n$ with $\eta_n=\psi_n(-b)\to 0$, the compact support of $V$ ensures $u_n$ solves
\[
   -u_n'' + V u_n = E_n u_n + \eta_n^2 u_n^3 , \qquad u_n(\pm b)\neq 0.
\]
As $\eta_n\to 0$ and $E_n\to 0$, elliptic estimates and $V \in L^1$ give (up to subsequence) $u_n\to u$ in $H^1_{\mathrm{loc}}(\R)$ with 
\[
   -u'' + V u = 0, \qquad u'(\pm b)=0, \quad u(\pm b)\neq 0.
\]
Thus $u$ is a threshold resonance of $H_V$, a contradiction.
 
Hence there exists $E_{\rm thr}<0$, depending only on $V$, such that every solution satisfies $E\le E_{\rm thr}$.  

Next, let $H_V$ also have no bound state, and suppose for contradiction that there exists a sequence of solutions $(\psi_n,E_n)$ with 
$\mathcal N[\psi_n]\to 0$.  
Set $\phi_n=\psi_n/\lVert\psi_n\rVert$, so that $\lVert\phi_n\rVert=1$ and
\[
   -\phi_n'' + V\phi_n - \mathcal N[\psi_n]\phi_n^3 = E_n \phi_n.
\]
By Lemma~\ref{lem:bound-4}, the sequence $\{\lVert\phi_n\rVert_{L^4}\}$ is uniformly bounded.  
The weak form implies
\begin{equation}\label{eq:useNL}
   E_n = \lVert\phi_n'\rVert^2 + \int V\lvert\phi_n\rvert^2 \, dx - \mathcal N[\psi_n]\lVert\phi_n\rVert_{L^4}^4 < 0.
\end{equation}
If $\mathcal N[\psi_n]\to 0$, the last term vanishes, so
\begin{equation} \label{eq:supEn}
   \limsup_{n} E_n = \limsup_{n} \Big( \lVert\phi_n'\rVert^2 + \int V\lvert\phi_n\rvert^2 \, dx \Big).
\end{equation}
On one hand, by the definition of the threshold, we have
\[
   \limsup_{n} E_n \le E_{\rm thr} < 0.
\]
On the other hand, the right-hand side of \eqref{eq:supEn} is the linear Schr\"odinger energy of a sequence of normalized functions, so the limit can only be negative if $H_V$ has a bound state.  
Since $H_V$ has no eigenvalues by hypothesis, this is a contradiction.  

Therefore there exists $\mathcal N_{\rm thr}>0$, depending only on $V$, such that every solution satisfies $\mathcal N[\psi]\ge \mathcal N_{\rm thr}$.  This proves Theorem \ref{thm:nogo}.
\end{proof}

\section{\texorpdfstring{Nonlinear bound states of NLS/GP from bound states and resonances of $H_V$}{Nonlinear bound states of NLS/GP from bound states and resonances of HV}}
\label{sec:bifurcations}

We turn to the nonlinear eigenvalue problem \eqref{eq:0-nls-full}, whose solutions are nonlinear bound states of NLS / GP. 
Since we seek solutions which are seeded by ``linear scattering data'' of $H_V$, we set
\begin{equation}  \psi(x) = \eta\, u(x),\label{eq:psietau}
\end{equation}
where $\eta$ will later be taken to be sufficiently small.
Then, $u$ satisfies the nonlinear eigenvalue problem:
  \begin{subequations}
  \label{eq:inner-outer}
  \begin{align}
      \Big(-\partial_x^2 + V -\eta^2 u^2\Big)u &= Eu,\quad \lvert x\rvert < b, \\
       \Big(-\partial_x^2 -\eta^2 u^2\Big) u &= Eu,\quad \lvert x\rvert > b, 
  \end{align}
  \end{subequations}
  with continuity conditions at $x=\pm b$ on $u$ and $\partial_x u$:
  \begin{equation} \big[u\big]\Big|_{\lvert x\rvert=b} =0\quad{\rm and}\quad   \big[\partial_x u\big]\Big|_{\lvert x\rvert=b} =0,
  \label{eq:continuity}
  \end{equation}
  and decay as $x$ tends to infinity:
\begin{equation} u(x)\to0\quad {\rm as}\quad \lvert x\rvert\to\infty.\label{decay}
\end{equation}
Clearly, $E<0$ is necessary for a non-trivial solution to exist.
\medskip

\subsubsection*{Toward a solution of \eqref{eq:inner-outer}} 
In the region $\lvert x\rvert > b$, any solution $u(x)$ which satisfies the decay condition \eqref{decay} must be of the form 
\begin{align}
    u(x)= \begin{cases} 
    \pm\mathcal{S}(x-x_{\rm L};E), & x<-b,\\[3pt]
     \pm\mathcal{S}(x-x_{\rm R};E), & x>+b,
     \end{cases}
\end{align}
for some choices of signs and real parameters $x_{\rm L}$ and $x_{\rm R}$.

Further, from the autonomous ODE satisfied for $\lvert x\rvert >b$ we have the identity
\[
\mathfrak{E}[u](x)\equiv  \big(u'(x)\big)^2  + \frac{\eta^2}{2} u(x)^4 + E\,u(x)^2 = 0,\quad\textrm{for}\quad  \lvert x\rvert >b.
\]
Hence, by the continuity constraint at $x=\pm b$, a solution $u(x)$ of \eqref{eq:inner-outer}, restricted to $\lvert x\rvert \le b$, must satisfy a boundary value problem with a nonlinear boundary condition:
\begin{subequations}\label{eq:nonlinear-bvp_full}
    \begin{align}\label{eq:nonlinear-bvp}
\Big(-\partial_x^2 + V(x) - \varepsilon\, u(x)^2\Big)u(x) &= E\,u(x),\quad \lvert x\rvert < b,\\
   \mathfrak{E}[u](\pm b)\ &=\ 0. \label{eq:nonlinear-bvp_BC}
\end{align}
\end{subequations}
Here, we have set $\varepsilon=\eta^2$. 
The system \eqref{eq:nonlinear-bvp} is equivalent to \eqref{eq:inner-outer} since any solution of \eqref{eq:nonlinear-bvp} can be extended to a function on all $\mathbb R$ which satisfies \eqref{eq:inner-outer}; see part (2) of Theorem \ref{thm:bifurcations} below.

Let us formally seek a solution of \eqref{eq:nonlinear-bvp} as a formal power series in $\varepsilon$:
\begin{subequations}\label{eq:power_series}
\begin{align}
    u(x) &= u_0(x) + \varepsilon u_1(x) + \dots \label{eq:ps-u}\\
    E &= E_0 + \varepsilon E_1 + \dots.\label{eq:ps-E}
\end{align}
\end{subequations}
Subsitution into \eqref{eq:nonlinear-bvp} we find
\begin{subequations}
\begin{align} \Big(-\partial_x^2+V\Big)u_0 &= E_0 u_0 \label{eq:u0-eqn}\\
\Big[\ \big(u_0'\big)^2 + E_0\ u_0^2\ \Big]\Big|_{x=\pm b} &=0.
\label{eq:u0-BC}\end{align}
\end{subequations}
We set $E_0=k_0^2$ and re-express the boundary condition \eqref{eq:u0-BC} as 
\begin{equation}
\Big( u_0' + ik_0 u_0\Big)\ \Big( u_0'  - ik_0 u_0\Big)\Big|_{x=\pm b} =0.
\label{eq:u0-BC-1}
\end{equation}

From \eqref{eq:u0-eqn} and \eqref{eq:u0-BC-1} arise several distinct linear eigenvalue problems for $H_V=-\partial_x^2+V$. Those which are associated with the zeros of $w(k)$ and $s_\pm(k)$, are as follows:

\subsubsection*{I. Outgoing radiation problem for bound states or scattering resonance states}

\begin{subequations}
\label{eq:outgoing}
\begin{align}\label{eq:outgoing+b}
    u_0'  - ik_0 u_0 &= 0\quad \textrm{at $x=+b$}\\
     u_0'  + ik_0 u_0 &= 0\quad \textrm{at $x=-b$}.
     \label{eq:outgoing-b}
\end{align}
\end{subequations}

Solutions of the outgoing radiation problem correspond to $k_\star\in \mathbb{C}_+\cup \mathbb{C}_-\cup \{0\}$ at which $w(k_\star)=0$. 
Recall that if $w(k_\star)=0$ with $\Im k_\star>0$, then $k_\star=i\kappa_\star$, with $\kappa_\star>0$ and 
 $E_0=k_\star^2=-\kappa^2$ is an eigenvalue of $H_V$ acting in $L^2(\mathbb R)$.
  Further, if $w(k_\star)=0$ with $\Im k_\star<0$ with scattering resonance pair $(k_0,\psi)$, then $(-\overline{k_0},\overline\psi)$ is also a scattering resonance state.

\subsubsection*{II. Right transmission problem}

\begin{subequations}
\label{eq:r-transmission}
\begin{align}\label{eq:r-transmission+b}
    u_0'  - ik_0 u_0 &= 0\quad \textrm{at $x=+b$}\\
     u_0'  - ik_0 u_0 &= 0\quad \textrm{at $x=-b$}.
     \label{eq:r-transmission-b}
\end{align}
\end{subequations}

Solutions arise from the zeros, $k\in \mathbb C\setminus\{0\}$, of $s_-(k)$. By symmetry, solutions of the incoming radiation and left transmission problems arise, respectively, from the zeros $w(-k)$ and $s_+(k)=s_-(-k)$.

\subsubsection*{Potential for bifurcations arising from the zeros of $w(k)$ and $s_-(k)$}

The above discussion suggests that the zeros of $w(k)$ and $s_-(k)$  potentially give rise, for $\varepsilon$ small, to nontrivial 
formal power series solutions \eqref{eq:power_series} of nonlinear boundary value problem  \eqref{eq:nonlinear-bvp}.
For these, in turn, to generate time-harmonic nonlinear bound states of NLS / GP, we require that their energy $E=k_\star^2$ be real. Thus,
\[\textrm{\it we restrict attention to  zeros, $k_\star=i\kappa_\star$, of $k\mapsto w(k)$ and $k\mapsto s_-(k)$ located on the imaginary axis.}\]
\subsection*{Bifurcation theorem} Let $k_\star=i\kappa_\star\in\mathbb C$, with $\kappa_\star\in\mathbb R\setminus\{0\}$, denote a simple \underline{non-zero} and purely imaginary solution of  $w(k)=0$ or $s_-(k)=0$. Assume that $U_\star(x)$ denotes a choice of corresponding solution of $H_VU_\star=k_\star^2 U_\star =-\kappa_\star^2U_\star$ for the outgoing radiation problem, where $w(k_\star)=0$ and,  for the transmission resonance problem either $s_-(k_\star)=0$ or $s_+(k_\star)=0$. In the case where $\kappa_\star=\Im k_\star<0$ (scattering resonance pole)
we assume the non-degeneracy condition:
\begin{equation} 2\kappa_\star \int_{-b}^b U_\star^2(x) dx + U^2_\star(-b) + U^2_\star(+b)\ \ne\ 0.\label{eq:non-degen-scat-pole}
\end{equation}

In the case where $\kappa_\star$ is a transmission resonance, we assume:
\begin{equation}
   2\kappa_\star \int_{-b}^b U_\star^2(x)\,dx 
   - U_\star^2(-b) \, + \, U_\star^2(+b) \;\neq\; 0.
\end{equation}

\begin{thm}[Bifurcations from Bound States and Scattering/Transmission Resonances]\label{thm:bifurcations} 

Assume the above setup, where $k_\star=i\kappa_\star\ne0$ is a simple purely imaginary and non-zero bound state pole, scattering resonance pole or transmission resonance. Equivalently, we assume that $k=i\kappa_\star$ is a simple zero or a simple pole of either $k\mapsto r_+(k)=s_+(k)/w(k)$ or $k\mapsto r_-(k)=s_-(k)/w(k)$, the reflection coefficients of $H_V$.\\
 Then, there exists $\varepsilon_0>0$ such that for all $0<\varepsilon<\varepsilon_0$
 the following holds:
\begin{enumerate}
    \item There is a solution $\big(E(\varepsilon),u(x,\varepsilon)\big)$
     of the nonlinear boundary value problem \eqref{eq:nonlinear-bvp}, such that  $E(\varepsilon):(0,\varepsilon_0)\to(-\infty,0)$ is analytic, where $E(0)=-\kappa_\star^2$ and $u(x,0)=U_\star(x)$,  for each $x\in(-b,b)$ and 
     \begin{subequations}
     \begin{align}
            E(\varepsilon) &= -\kappa_\star^2 + \mathcal{O}(\varepsilon)\\
            u(x,\varepsilon) & = 
                 U_\star(x) + \mathcal{O}_{L^\infty}(\varepsilon),
                \quad  -b \le x\le +b.
        \end{align}
        \end{subequations}

     \item In all cases in part (1), the solution can be continued to all $x\in\mathbb R$ as a nonlinear bound state of NLS/GP.  There exist translates $x_{\rm R}(\varepsilon)$ and $ 
     x_{\rm L}(\varepsilon)\in\mathbb R$, such that NLS/GP \eqref{eq:V-nls-full} has a time-harmonic nonlinear bound state solution defined on all $\mathbb R$: $\Psi(x,t) = e^{-iE(\varepsilon)t} \psi(x,\varepsilon)$, where 
        \begin{align}\label{eq:nl-bound-state}
            E(\varepsilon) &= -\kappa_\star^2 + \mathcal{O}(\varepsilon)\\
            \psi(x,\varepsilon) & = \begin{cases}
             \pm \mathcal{S}\big(x-x_{\rm L}(\varepsilon), E(\varepsilon)\big), & x<-b\\
                \sqrt{\varepsilon}\Big(\ U_\star(x) + \mathcal{O}_{L^\infty}(\varepsilon) \Big),
                & -b \le x\le +b\\
                \pm \mathcal{S}\big(x-x_{\rm R}(\varepsilon), E(\varepsilon)\big), & x>+b\ .
            \end{cases}
        \end{align}

        Here, $\mathcal{S}(x,E)$ denotes the 1-soliton; see Theorem \ref{thm:1soliton}.
        \item  For $\varepsilon\to0$, if $k_\star$ is a bound state or scattering resonance pole,  then 
         \begin{subequations}
 \label{eq:xRL-1}
 
\begin{align}
x_{\rm R}(\varepsilon) 
&\approx \Bigg[\, b + \frac{1}{\kappa_\star}
   \log\!\Big(\tfrac{|U_\star(b)|}{2\sqrt{2}\,\kappa_\star}\Big) \,\Bigg]
   - \frac{1}{2\kappa_\star}\,\log \! \left( \frac 1 \varepsilon \right), \\[6pt]
x_{\rm L}(\varepsilon) 
&\approx \Bigg[\, -b - \frac{1}{\kappa_\star}
   \log\!\Big(\tfrac{|U_\star(-b)|}{2\sqrt{2}\,\kappa_\star}\Big) \,\Bigg]
   + \frac{1}{2\kappa_\star}\,\log \! \left( \frac 1 \varepsilon \right).
\end{align}
  \end{subequations}

  If $k_\star$ is a transmission resonance, then
           \begin{subequations}
 \label{eq:xRL-2}
\begin{align}
x_{\rm R}(\varepsilon) 
&\approx \Bigg[\, b + \frac{1}{\kappa_\star}
   \log\!\Big(\tfrac{|U_\star(b)|}{2\sqrt{2}\,\kappa_\star}\Big) \,\Bigg]
   - \frac{1}{2\kappa_\star}\,\log \! \left( \frac 1 \varepsilon \right), \\[6pt]
x_{\rm L}(\varepsilon) 
&\approx \Bigg[\, -b + \frac{1}{\kappa_\star}
   \log\!\Big(\tfrac{|U_\star(-b)|}{2\sqrt{2}\,\kappa_\star}\Big) \,\Bigg]
   - \frac{1}{2\kappa_\star}\,\log \! \left( \frac 1 \varepsilon \right).
\end{align}

  \end{subequations}
\end{enumerate}
\end{thm}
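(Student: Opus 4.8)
The plan is to reduce the interior problem \eqref{eq:nonlinear-bvp} to a single scalar equation by shooting, and then invoke the analytic implicit function theorem, with the non-degeneracy hypothesis \eqref{eq:non-degen-scat-pole} appearing exactly as the transversality condition. I first normalize the amplitude by fixing $u(-b)=U_\star(-b)$, which breaks the scaling symmetry $(u,\varepsilon)\mapsto(\lambda u,\lambda^{-2}\varepsilon)$ of \eqref{eq:nonlinear-bvp} at fixed $E$ and so selects the branch limiting onto $U_\star$ itself rather than a multiple of it. The boundary condition \eqref{eq:nonlinear-bvp_BC} at $x=-b$, namely $\mathfrak{E}[u](-b)=0$, then determines $u'(-b)=Q(E,\varepsilon)$ analytically as a root of a quadratic, with the sign of the square root fixed by continuity so that $Q(-\kappa_\star^2,0)$ reproduces the linear boundary value in \eqref{eq:outgoing-b} (resp.\ \eqref{eq:r-transmission-b}). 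Since $V\in L^1$, the Cauchy problem for the interior ODE on $[-b,b]$ has a unique $C^1$ solution depending analytically on $(E,\varepsilon)$ through its Volterra formulation; shooting from $x=-b$ to $x=+b$ with data $(U_\star(-b),Q(E,\varepsilon))$ produces $(u(b),u'(b))$, and I define the scalar bifurcation function $\mathcal G(E,\varepsilon)$ to be the residual of the remaining boundary condition at $x=+b$ (the physically correct factor of \eqref{eq:u0-BC-1}, again selected by continuity).

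By construction $\mathcal G(-\kappa_\star^2,0)=0$: the linear mode $U_\star$ solves \eqref{eq:u0-eqn} with the full outgoing (resp.\ transmission) conditions precisely because $k_\star$ is a zero of $w$ (resp.\ $s_-$). The crux, and the main obstacle, is to establish the transversality $\partial_E\mathcal G(-\kappa_\star^2,0)\neq0$ and to identify it with the stated hypothesis. Writing $\dot u=\partial_E u$ and differentiating $(-\partial_x^2+V-E)u=0$ gives $(-\partial_x^2+V-E)\dot u=u$, and the resulting Lagrange identity integrates to
\[
\big[\,u\,\dot u'-u'\,\dot u\,\big]_{-b}^{\,b}\;=\;-\int_{-b}^b u^2\,dx .
\]
Evaluating at $E=-\kappa_\star^2$, $u=U_\star$, and inserting the fixed left data ($\dot u(-b)=0$ and $\dot u'(-b)=-\tfrac{1}{2\kappa_\star}U_\star(-b)$, using $\kappa(E)^2=-E$) together with $U_\star'(b)=-\kappa_\star U_\star(b)$, the interior $\dot u(b)$ terms cancel and I obtain
\[
-2\kappa_\star\,U_\star(b)\,\partial_E\mathcal G(-\kappa_\star^2,0)\;=\;2\kappa_\star\!\int_{-b}^b U_\star^2\,dx+U_\star^2(-b)+U_\star^2(b),
\]
which is exactly the left-hand side of \eqref{eq:non-degen-scat-pole}; the transmission case \eqref{eq:r-transmission} produces the sign-flipped expression, because \eqref{eq:r-transmission-b} reverses the sign of $\dot u'(-b)$ and hence of the $U_\star^2(-b)$ term. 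Since $U_\star(b)\neq0$ (otherwise the outgoing/transmission data force $U_\star\equiv0$), the hypothesis gives $\partial_E\mathcal G\neq0$, and the analytic implicit function theorem yields a unique analytic $E(\varepsilon)$ with $E(0)=-\kappa_\star^2$ and $u(\cdot,\varepsilon)=U_\star+\mathcal O_{L^\infty}(\varepsilon)$, proving part~(1). This identity also explains the form of the hypothesis: it is the regularized mass $2\kappa_\star\int_{\mathbb R}U_\star^2$, automatically nonzero (positive) in the genuine bound-state case $\kappa_\star>0$.

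For part~(2), the equivalence of \eqref{eq:nonlinear-bvp} with \eqref{eq:inner-outer} is geometric: for fixed $E<0$ the soliton orbit $\{(\mathcal S,\mathcal S')(\cdot-x_0;E)\}$ traced as $x_0$ varies is exactly the zero level set of $\mathfrak E$ in the $(u,u')$ phase plane, so $\mathfrak E[u](\pm b)=0$ is precisely the solvability condition for a $C^1$ match to translated solitons $\pm\mathcal S(\cdot-x_{\rm R},E)$ and $\pm\mathcal S(\cdot-x_{\rm L},E)$; the translate and sign are uniquely fixed by $\big(u(\pm b),u'(\pm b)\big)$, and the resulting $\psi=\sqrt\varepsilon\,u$ lies in $H^1(\mathbb R)$ and solves \eqref{eq:V-nls-full}, giving \eqref{eq:nl-bound-state}. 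Finally, part~(3) follows by making the match quantitative: from $\sqrt\varepsilon\,u(b,\varepsilon)=\mathcal S(b-x_{\rm R};E(\varepsilon))$ with $u(b,\varepsilon)\to U_\star(b)$ and $E(\varepsilon)\to-\kappa_\star^2$, I insert \eqref{def:sech-thing} and use $\sech(z)\sim 2e^{-|z|}$ as the soliton center recedes from $[-b,b]$; solving for $x_{\rm R}$ yields \eqref{eq:xRL-1}, and the analogous computation at $x=-b$ yields $x_{\rm L}$. The transmission formulas \eqref{eq:xRL-2} differ only through the sign of the boundary condition at $x=-b$, which flips the orientation of the left soliton match and hence the sign of the $\log(1/\varepsilon)$ term in $x_{\rm L}$. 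Beyond the transversality computation, the one persistently delicate point is the continuity-based selection of the correct sign (decaying soliton branch) in $Q$ and in $\mathcal G$, since $\mathfrak E=0$ is sign-blind; tracking this across the bound-state, scattering-resonance ($\kappa_\star<0$), and transmission cases is where the bookkeeping must be done carefully.
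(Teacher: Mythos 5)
Your proposal is correct and follows essentially the same route as the paper: shooting from $x=-b$ with the nonlinear boundary condition built into the initial data, reducing to a scalar residual at $x=+b$, verifying transversality via the Lagrange/Wronskian identity (your $E$-parametrized identity is exactly the paper's $\partial_\kappa F(\kappa_\star,0)$ computation up to the factor $dE/d\kappa=-2\kappa_\star$, and it reproduces both non-degeneracy expressions, including the sign flip of the $U_\star^2(-b)$ term in the transmission case), and then gluing to translated solitons through the zero level set of $\mathfrak E$ with the logarithmic matching for $x_{\rm R},x_{\rm L}$. The only differences are cosmetic — bifurcating in $E$ rather than $\kappa=\sqrt{-E}$, and normalizing $u(-b)=U_\star(-b)$ rather than $U_\star(-b)=1$.
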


Figures~\ref{fig:boundstate}, \ref{fig:resonance}, and \ref{fig:transmission}  are schematics which illustrate the character of these bifurcation scenarios. \\

\begin{rem}[Generalization to focusing nonlinearities]
The conclusions of Theorem~\ref{thm:bifurcations} extend to general focusing nonlinearities $-f(|\psi|^2)$ that admit a homoclinic orbit (localized soliton) via an analogous gluing and bifurcation analysis.
\end{rem}

\medskip

\subsubsection{Proof of part 1 of Theorem \ref{thm:bifurcations}}
  We seek solutions of the nonlinear boundary value problem
  \begin{subequations}
    \begin{align}\label{eq:nonlinear-bvp1}
&\Big(-\partial_x^2 + V(x) - \varepsilon\ u(x)^2\Big)u(x) = E u(x),\quad \lvert x \rvert <b\\
&   \mathfrak{E}[u](\pm b)\ =\ 0, \quad {\rm where}\\
&   \mathfrak{E}[u]\equiv  \big(u'\big)^2  + \frac{\varepsilon}{2} u^4 + E u^2.
\end{align}
\end{subequations}

We focus at first on the case of states arising from bound state poles of $H_V$: $w(k_\star) = 0$, $k_\star=i\kappa_\star$, with $\kappa_\star > 0$. The corresponding eigenpair of $H_V$ is denoted $(E_\star=-\kappa_\star^2, U_\star(x))$. We later turn to  the case of scattering resonances and transmission zeros.

Let $E=k^2=(i\kappa)^2=-\kappa^2$, with $\kappa > 0$.  
We shoot from $x=-b$ to $x=b$.  For  all $\kappa$ in a real neighborhood of $\kappa_\star$, and $\varepsilon$ sufficiently small,  define  $U=U(x,\kappa,\varepsilon)$ to be the unique solution of the initial value problem with initial conditions imposed at $x=-b$:
\begin{subequations}
\label{eq:U-IVP}
\begin{align}
  &  \Big(-\partial_x^2 + V(x) - \varepsilon\  U^2(x,\kappa,\varepsilon)\Big)U(x,\kappa,\varepsilon)= -\kappa^2 U(x,\kappa,\varepsilon) \label{eq:U-eqn}\\
 &   U(-b,\kappa,\varepsilon) =1\\
 & \partial_x U(-b,\kappa,\varepsilon) - U(-b,\kappa,\varepsilon)\sqrt{ \kappa^2  - \frac{\varepsilon}{2}U^2(-b,\kappa,\varepsilon)}\ =\ 0 . \label{eq:rad-minus-b}
\end{align}
\end{subequations}
For fixed $\varepsilon$, $\big(\kappa, U(x,\kappa,\varepsilon)\big)$ is a solution  of the nonlinear boundary value problem
\eqref{eq:nonlinear-bvp1} if $U$ satisfies the outgoing boundary condition at $x=+b$:
\begin{equation}
F(\kappa,\varepsilon)\ \equiv \ \partial_x U(+b,\kappa,\varepsilon) +  U(+b,\kappa,\varepsilon)\ \sqrt{ \kappa^2  - \frac{\varepsilon}{2} U^2(+b,\kappa,\varepsilon)} \ =\ 0.
\label{eq:nlbc-at+b}
\end{equation}

The map $(\kappa,\varepsilon)\mapsto F(\kappa,\varepsilon)$ is smooth for $|\kappa-\kappa_\star|\, <\kappa_1$ and $|\varepsilon|\, <\varepsilon(\kappa_1)$ sufficiently small.

For $\varepsilon=0$ we have $F(\kappa,0)= \partial_x U(+b,\kappa,0) + \kappa U(+b,\kappa,0)$. By hypothesis, there is a non-trivial solution of the boundary value problem
\begin{align*} 
&\Big(-\partial_x^2 + V(x)\Big)U_\star(x) = -\kappa_\star^2\ U_\star(x)\\
&\partial_x U_\star(-b) - \kappa_\star U_\star(-b) = 0\\
&\partial_x U_\star(+b) + \kappa_\star U_\star(+b) = 0.
\end{align*}
Since $U_\star$ is non-trivial, $U_\star(-b)\ne0$, and  so we can divide $U_\star(x)$ by $U_\star(-b)$  to arrange for  $U_\star(-b)=1$. With this normalization we have, by uniqueness, that $U(x,\kappa_\star,0)=U_\star(x)$.  Hence,  $F(\kappa_\star,0)=0$.

We next apply the implicit function theorem to infer the existence of a curve $\varepsilon\mapsto \kappa(\varepsilon)$, defined in a neighborhood of $\varepsilon=0$, such that $F(\kappa(\varepsilon),\varepsilon)=0$. It suffices to verify that $\partial_\kappa F(\kappa_\star,0)=0$.

Let \[\dot{U}_\star(x) = \partial_\kappa U(x,\kappa,\varepsilon)\Big|_{(\kappa,\varepsilon)=(\kappa_\star,0)}.\]
Differentiating the expression for $F(\kappa,\varepsilon)$ in \eqref{eq:nlbc-at+b} with respect to $\kappa$ and setting $(\kappa,\varepsilon)=(\kappa_\star,0)$, we find that the condition to be verified is:
\begin{equation}
\partial_\kappa F(\kappa_\star,0)\ =\ \partial_x \dot{U}_\star(b) + \kappa_\star \dot{U}_\star(b) + U_\star(b)\ \ne\ 0.
\label{eq:ift-cond}\end{equation}
Hence our next task is to evaluate the expression in \eqref{eq:ift-cond}.

An equation for $\dot{U}(x)$ is obtained from differentiation of \eqref{eq:U-eqn} with respect to $\kappa$ and setting $(\kappa,\varepsilon)=(\kappa_\star,0)$:
\[
 -\dot{U}_\star'' +V\dot{U}_\star +\kappa_\star^2\dot{U}_\star = -2\kappa_\star U_\star.
\]
Multiplying by $U_\star$ we obtain, using the equation for $U_\star$,
\[ \big(U_\star' \dot{U}_\star\big)' - \big( U_\star \dot{U}_\star'\big)' = - 2\kappa_\star U_\star^2.\]
Then, integrating over $-b\le x\le b$ we obtain:
\[
\Big(U_\star'(b) \dot{U}_\star(b) - U_\star'(-b) \dot{U}_\star(-b)\Big)\ -\ 
\Big( U_\star(b) \dot{U}_\star'(b) - U_\star(-b) \dot{U}_\star'(-b) \Big) \ =\ - 2\kappa_\star \int_{-b}^b U_\star^2(x) dx.
\]
Using the boundary conditions of $U_\star$: $U_\star'(-b) = \kappa_\star U_\star(-b)$
 and $U_\star'(+b) = -\kappa_\star U_\star(b)$ we have
\[
\Big(-\kappa_\star U_\star(b) \dot{U}_\star(b) - \kappa_\star U_\star(-b) \dot{U}_\star(-b)\Big)\ -\ 
\Big( U_\star(b) \dot{U}_\star'(b) - U_\star(-b) \dot{U}_\star'(-b) \Big) \ =\ - 2\kappa_\star \int_{-b}^b U_\star^2(x) dx.
\]
Therefore, 
\[
U_\star(b)\Big(\dot{U}_\star'(b)+\kappa_\star \dot{U}_\star(b) \Big) = 
2\kappa_\star \int_{-b}^b U_\star^2(x) dx + 
U_\star(-b)\Big(\dot{U}_\star'(-b)-\kappa_\star \dot{U}_\star(-b) \Big).
\]
Next, we can eliminate the $\dot{U}_\star$ dependence by differentiating the boundary condition of $U$ at $x=-b$ with respect to $\kappa$ and setting $(\kappa,\varepsilon)=(\kappa_\star,0)$, giving
\[\ \dot{U}_\star'(-b) - \kappa_\star \dot{U}_\star(-b) = U_\star(-b) .\]
Hence,
\[
U_\star(b)\Big(\dot{U}_\star'(b)+\kappa_\star \dot{U}_\star(b) \Big) = 
2\kappa_\star \int_{-b}^b U_\star^2(x) dx + 
U^2_\star(-b).
\]
Therefore,
\begin{equation}
U_\star(b)\ \partial_\kappa F(\kappa_\star,0) = U_\star(b)\Big( \dot{U}'_\star(b)  +\kappa_\star \dot{U}_\star(b) \Big) + U^2_\star(b)
 = 2\kappa_\star \int_{-b}^b U_\star^2(x) dx + U^2_\star(-b) + U^2_\star(b) 
\label{eq:Fstar}
\end{equation}
or
\begin{equation}\label{eq:parF}
\partial_\kappa F(\kappa_\star,0) = 
\frac{2\kappa_\star \int_{-b}^b U_\star^2(x) dx + U^2_\star(-b) + U^2_\star(b)}{U_\star(b)}.
\end{equation}
Again, since $U_\star$ is non-trivial, it must be that $U_\star(b)\ne0$ and we obtain the  expression \eqref{eq:parF} for  $\partial_\kappa F(\kappa_\star,0)$, whose non-vanishing or vanishing we may investigate.

Note that if $k_\star=i\kappa_\star$ is a bound state pole, then $\kappa_\star>0$. In this case, $\partial_\kappa F(\kappa_\star,0)$ is non-zero. The implicit function theorem then ensures a bifurcation. This completes the proof of part (1) for the case of a bound state pole.

For the analysis of a scattering resonance: $k_\star=i\kappa_\star$ with $\kappa_\star<0$,  we proceed analogously. In this case the required boundary conditions are:
\begin{equation}\label{eq:sys-res}
\left\{
\begin{aligned}
   &\partial_x U(-b,\kappa,\varepsilon) 
     + U(-b,\kappa,\varepsilon)
       \sqrt{\kappa^2 - \tfrac{\varepsilon}{2}U^2(-b,\kappa,\varepsilon)} = 0, \\[0.4em]
   &\partial_x U(b,\kappa,\varepsilon) 
     - U(b,\kappa,\varepsilon)
       \sqrt{\kappa^2 - \tfrac{\varepsilon}{2}U^2(b,\kappa,\varepsilon)} = 0 .
\end{aligned}
\right.
\end{equation}

An analogous calculation yields the expression in \eqref{eq:parF} for  $\partial_k F$, although now with $\kappa_\star<0$. To apply the implicit function theorem, ensuring the existence of a bifurcation, we  impose the non-degeneracy condition on the scattering resonance pair $(i\kappa_\star,U_\star)$:
 \begin{equation}\label{eq:non-deg} 2\kappa_\star \int_{-b}^b U_\star^2(x) dx + U^2_\star(-b) + U^2_\star(+b)\ \ne\ 0.\end{equation}
 This completes the proof of part (1) for the case of a scattering resonance pole.
  
Finally, we turn to bifurcations from transmission resonances. These correspond to the zeros of $s_\pm(k)$ and, without loss of generality, we restrict to the zeros of $s_-(k)$.
As with the case of scattering resonances, we consider transmission resonances located on the imaginary axis. If $s_-(i\kappa_\star)=0$, with $\kappa_\star>0$, then the corresponding transmission resonance mode decays as $x\to+\infty$ and grows as $x\to-\infty$. Transmission resonance modes for $k_\star$ on the negative imaginary axis decay as $x\to-\infty$ and grow as $x\to+\infty$. The appropriate boundary conditions for nonlinear bound states bifurcating from  transmission resonances are therefore:
\begin{equation}\label{eq:tr-bc}
\left\{
\begin{aligned}
   &\partial_x U(-b,\kappa,\varepsilon) 
      \pm U(-b,\kappa,\varepsilon)
      \sqrt{\kappa^2 - \tfrac{\varepsilon}{2}U^2(-b,\kappa,\varepsilon)} = 0, \\[0.4em]
   &\partial_x U(b,\kappa,\varepsilon) 
      \pm U(b,\kappa,\varepsilon)
      \sqrt{\kappa^2 - \tfrac{\varepsilon}{2}U^2(b,\kappa,\varepsilon)} = 0,
\end{aligned}
\right.
\end{equation}
where the signs $+$ or $-$ corresponding, respectively, to whether $k_\star\in\mathbb C_+$ or $\C_-$.

An analogous calculation to the case of bound state poles and scattering resonance poles yields, 
for the case of transmission resonances:
\begin{equation}\label{eq:tr-cond}
\partial_\kappa F(\kappa_\star,0) 
= \frac{2\kappa_\star \int_{-b}^b U_\star^2(x)\,dx 
      - U_\star^2(-b) + U_\star^2(+b)}{U_\star(+b)}.
\end{equation}
 Therefore, under the assumption that
 \[ 2\kappa_\star \int_{-b}^b U_\star^2(x)\,dx 
      - U_\star^2(-b) + U_\star^2(+b)\ne0, \]
      the implicit function theorem implies the 
      existence of a bifurcation. This concludes
      the proof of Part (1) for the case of transmission resonances. The proof of Part 1 of Theorem \ref{thm:bifurcations} is now complete.

  \subsubsection*{Proof of Part 2 of Theorem \ref{thm:bifurcations}} \label{sec:part2} We next extend  the solution 
   $\big(\kappa(\varepsilon), U(x,\kappa(\varepsilon),\varepsilon)\big)$ of the boundary value problem in Part 1 to a nonlinear bound state pair $(-\kappa(\varepsilon)^2,\psi)$ of  NLS/GP \eqref{eq:V-nls-full} on all $\mathbb R$: 
    \begin{subequations}
  \label{eq:inner-outer-psi}
  \begin{align}
      \Big(-\partial_x^2 + V - \psi^2\Big)\psi &= -\kappa(\varepsilon)^2  \psi,\quad \lvert x \rvert  < b \label{eq:inner}\\
       \Big(-\partial_x^2 - \psi^2\Big) \psi &= -\kappa(\varepsilon)^2  \psi,\quad \lvert x \rvert > b \label{eq:outer}
  \end{align}
  \end{subequations}
  together with continuity of $\psi(x)$ and $\psi'(x)$ at $x=\pm b$.

 Since $\psi=\eta u$ and $\varepsilon=\eta^2$,  (see \eqref{eq:psietau}) the function $\psi(x,\varepsilon) = \sqrt{\varepsilon} U(x,\kappa(\varepsilon),\varepsilon)$ is a solution of the nonlinear bound state  problem, \eqref{eq:inner}, on $\lvert x \rvert<b$. To continue $\psi(x,\varepsilon)$ to a solution of \eqref{eq:inner-outer-psi} on $\mathbb R$, must impose continuity conditions on $\psi(x)$ and $\psi'(x)$ at $x=\pm b$.
\subsection*{\texorpdfstring{Continuity of $\psi(x)$ at $x = +b$}{Continuity of psi(x) at x = +b}}
This is equivalent to
\[
\sqrt{\varepsilon}\,U\big(b,\kappa(\varepsilon),\varepsilon\big)
= \operatorname{sgn}\!\Big(U\big(b,\kappa(\varepsilon),\varepsilon)\Big)\,
\mathcal S\Big(b-x_{\rm R}(\varepsilon),-\kappa(\varepsilon)^2\Big),
\]
or, in magnitude,
\[
\sqrt{\varepsilon}\,\big|U\big(b,\kappa(\varepsilon),\varepsilon)\big|
= \mathcal S\Big(b-x_{\rm R}(\varepsilon),-\kappa(\varepsilon)^2\Big).
\]
For $\varepsilon>0$ small with $\kappa(\varepsilon)\to\kappa_\star>0$,
\[
\sqrt{\varepsilon}\,\big|U_\star(b)\big|
= \mathcal S\big(b-x_{\rm R}(\varepsilon),-\kappa_\star^2\big).
\]

Then taking $\varepsilon$ small enough and $\kappa_\star \neq 0$, the left hand side is in the range of the right-hand side, thus there is a solution $x_{\rm R}$.

Specifically, assume that $\kappa_\star>0$ (bound state pole or scattering resonance pole). Then,   $|U_\star|$ is \emph{decreasing} at $x=b$, requiring a matching to the right flank of $\mathcal{S}$. Thus,   $b-x_{\rm R}>0$ and $x_{\rm R}(\varepsilon)$ is given by, for $\varepsilon$ small by:
\[
\sqrt{\varepsilon}\,\big|U_\star(b)\big|\;\approx\;
2 \sqrt{2 \kappa_\star^2} \exp\!\big(-\kappa_\star(b-x_{\rm R})\big).
\]
Hence, 
\begin{equation}
\begin{aligned}
x_{\rm R}(\varepsilon) 
&\approx b+\frac{1}{\kappa_\star}
   \log\!\Big(\frac{\sqrt{\varepsilon}\,\big|U_\star(b)\big|}{2 \sqrt{2 \kappa_\star^2}}\Big) \\[6pt]
&= \Bigg[\, b + \frac{1}{\kappa_\star}
   \log\!\Big(\tfrac{|U_\star(b)|}{2\sqrt{2}\,\kappa_\star}\Big)\,\Bigg]
   - \frac{1}{2\kappa_\star}\,\log\!\left(\tfrac{1}{\varepsilon}\right)
   \;\longrightarrow\; -\infty.
\end{aligned}
\end{equation}

If $\kappa_\star < 0$ (bound state pole or scattering resonance pole), then the profile $|U_\star|$ is \emph{increasing} at $x=b$, requiring a matching to the left flank of $\mathcal{S}$. Thus,  $b-x_{\rm R}<0$ we 
\[
\sqrt{\varepsilon}\,\big|U_\star(b)\big|\;\approx\;
2 \sqrt{2 \kappa_\star^2} \exp\!\big(-\kappa_\star(b-x_{\rm R})\big).
\]
In this case, 
\begin{equation}
\begin{aligned}
x_{\rm R}(\varepsilon) 
&\approx b+\frac{1}{\kappa_\star}
   \log\!\Big(\frac{\sqrt{\varepsilon}\,\big|U_\star(b)\big|}{2 \sqrt{2 \kappa_\star^2}}\Big) \\[6pt]
&= \Bigg[\, b + \frac{1}{\kappa_\star}
   \log\!\Big(\tfrac{|U_\star(b)|}{2\sqrt{2}\,\kappa_\star}\Big)\,\Bigg]
   - \frac{1}{2\kappa_\star}\,\log\!\left(\tfrac{1}{\varepsilon}\right)
   \;\longrightarrow\; +\infty.
\end{aligned}
\label{eq:xR}
\end{equation}

\subsection*{\texorpdfstring{Continuity of $\psi'(x)$ at $x = +b$}{Continuity of psi'(x) at x = +b}}
That $\psi'$ is continuous at $x=+b$ follows from the fact that the inner solution ($\lvert x \rvert \, \le b$), given by $\psi(x,\varepsilon) = \sqrt{\varepsilon} U(x,\kappa(\varepsilon),\varepsilon)$ and the outer solution ($x\ge b$) given by $\psi(x,\varepsilon) = {\rm sgn}\Big(U(b,\kappa(\varepsilon),\varepsilon)\Big)\times \mathcal{S}\Big(b-x_{\rm R}, -\kappa^2(\varepsilon)\Big)$ both satisfy one of the boundary conditions
\begin{equation}
\partial_x \psi(b,\varepsilon) \pm  \psi(b,\varepsilon)\ \sqrt{ \kappa^2(\varepsilon)  - \frac{\varepsilon}{2} \psi^2(b,\varepsilon)} \ =\ 0
\label{eq:psi-nlbc-at+b}
\end{equation}
at $x=+b$, and the continuity of $\psi$ at $x=+b$.  Equation 
\eqref{eq:psi-nlbc-at+b} and continuity of $\psi$ at $x=+b$ imply:
$\partial_x\psi(b^-,\varepsilon)=
 \partial_x \psi(b^+,\varepsilon)$.

 Analogous considerations establish 
 the continuity of both $\psi$ and $\psi'$ at $x=-b$ and
\begin{equation}
\begin{aligned}
x_{\rm L}(\varepsilon) 
&\approx -b \mp \frac{1}{\kappa_\star}
   \log\!\Big(\frac{\sqrt{\varepsilon}\,\big|U_\star(-b)\big|}{2 \sqrt{2 \kappa_\star^2}}\Big), \\[6pt]
&= \Bigg[\, -b \mp \frac{1}{\kappa_\star}
   \log\!\Big(\tfrac{|U_\star(-b)|}{2\sqrt{2}\,\kappa_\star}\Big)\,\Bigg]
   \mp \frac{1}{2\kappa_\star}\,\log\!\left(\tfrac{1}{\varepsilon}\right)
\end{aligned}
\label{eq:xL}
\end{equation}
where $\mp$ is chosen depending on whether $k_\star=i\kappa_\star$ is a zero of $w(k)$ or $s_-(k)$, respectively. 
By convention, transmission resonances have growth/decay at $x=-b$ opposite to bound states or scattering resonances. In other words, if $\kappa_\star>0$, then $w(i\kappa_\star)=0$ corresponds to decay and $s_-(i\kappa_\star)=0$ to growth, 
while for $\kappa_\star<0$ the roles are reversed. This choice depends only on the behavior at $x=-b$ by our convention of using $s_-(k)$.

 Now for $\varepsilon=\eta^2$ small and positive, and $\kappa_\star \neq 0$, the expression in \eqref{eq:nl-bound-state} defines a nonlinear bound state on all $\mathbb R$ in $H^1(\mathbb R)$, thus completing the proof of Part 2 of Theorem \ref{thm:bifurcations}. \\
 
   With $x_{\rm L}$ and $x_{\rm R}$ defined as in \eqref{eq:xR} and \eqref{eq:xL}, we can simplify them further into the expressions given in Part 3 \eqref{eq:xRL-1} and \eqref{eq:xRL-2}, completing the proof of Theorem~\ref{thm:bifurcations}.\\
  \hfill \qed

  \medskip

\subsection{The case of a threshold (zero energy) resonance.}\label{sec:ZER}

For one-dimensional potential wells, $H_V$ always has at least one bound-state pole $i\kappa_0$ with $\kappa_0>0$.  
As the well is made deeper, an additional bound-state pole appears when a scattering resonance (a zero of $w(k)$ in the lower half-plane) moves upward along the imaginary axis and passes through $k=0$.  
The point of transition, where $w(k_\star)=0$ with $k_\star=i\kappa_\star=0$, corresponds to a \emph{threshold resonance}.  
At this point, the scattering resonance and the transmission resonance problems ``degenerate''; they both impose Neumann boundary conditions at $x=\pm b$.

This scenario is illustrated for various cases in Section~\ref{sec:zen-res}. In all cases shown, scattering and transmission resonances coalesce at $k=0$ as the potential depth is varied.  
The resulting threshold resonance gives rise to multiple nonlinear branches bifurcating from $(E,\mathcal N)=(0,0)$ in the cases of Figures~\ref{fig:zerobif2} and \ref{fig:zerobif3}.
 
It turns out that the analytical arguments  which we applied to produce bifurcations from non-zero
bound state and resonance poles and transmission resonances (Theorem \ref{thm:bifurcations}),
 do not immediately extend to the case where $k_\star=0$; indeed $F(\kappa,\varepsilon)$ is not regular in a neighborhood of $(\kappa_\star,\varepsilon)=(0,0)$.  
In the following, we therefore restrict our study to the bifurcations having  even or odd symmetry; the boundary conditions of transmission resonances preclude these symmetries. A study of the general case, where multiple bifurcations may occur, is work in progress.

\subsubsection{\bf Bifurcation from an even or odd threshold scattering resonance pole}\label{sec:bifthr}
\begin{lem}\label{lem:evenorodd}
    If $H_V$ has a threshold resonance $H_V U_\star = 0$ such that $U_\star'(\pm b) = 0$ and $U_\star \not \equiv 0$, then it is unique up to a non-zero constant. Furthermore, if $V(x) = V(-x)$, then $U_\star$ is either odd or even.
\end{lem}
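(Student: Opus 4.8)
The plan is to establish uniqueness first, and then read off the even/odd dichotomy from uniqueness combined with the reflection symmetry of $V$.

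For uniqueness, I would work on the interval $[-b,b]$, where the continuation to all of $\mathbb R$ is forced: outside $[-b,b]$ one has $V\equiv0$, so any solution of $H_VU=0$ satisfies $U''=0$ there, and the Neumann conditions $U'(\pm b)=0$ together with continuity of $U'$ force $U'\equiv0$, hence $U$ constant, on each exterior ray. Thus a threshold resonance is determined by its restriction to $[-b,b]$, and that restriction is itself nontrivial (were it to vanish, ODE uniqueness would give $U_\star\equiv0$). Since $V\in L^1$, the linear ODE $-U''+VU=0$ on $[-b,b]$ has a two-dimensional solution space $\mathcal S$ (solutions are $C^1$ with absolutely continuous derivative). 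I would then consider the linear map $T\colon\mathcal S\to\mathbb R^2$, $T(U)=\big(U'(-b),U'(b)\big)$, whose nonzero kernel elements are exactly the threshold resonances. The solution $U_0$ with data $U_0(-b)=0$, $U_0'(-b)=1$ satisfies $U_0'(-b)=1\neq0$, so $U_0\notin\ker T$; hence $T\not\equiv0$, $\operatorname{rank}T\ge1$, and $\dim\ker T\le1$. Because a nontrivial threshold resonance exists by hypothesis, $\dim\ker T=1$, and every threshold resonance is a scalar multiple of $U_\star$, giving uniqueness up to a nonzero constant.

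For the symmetry statement, assume $V(x)=V(-x)$ and set $\tilde U(x):=U_\star(-x)$. A direct substitution shows $\tilde U$ is again a threshold resonance: evenness of $V$ gives $-\tilde U''+V\tilde U=\big(-U_\star''+VU_\star\big)(-x)=0$, while $\tilde U'(x)=-U_\star'(-x)$ yields $\tilde U'(\pm b)=-U_\star'(\mp b)=0$, and clearly $\tilde U\not\equiv0$. By the uniqueness just proved, $\tilde U=c\,U_\star$ for some constant $c$, i.e.\ $U_\star(-x)=c\,U_\star(x)$. Applying this identity twice gives $U_\star(x)=c^2U_\star(x)$, so $c^2=1$ and $c=\pm1$; the value $c=+1$ makes $U_\star$ even and $c=-1$ makes it odd.

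Since the argument is essentially dimension counting for a second-order linear ODE followed by a one-line symmetry computation, there is no serious analytic obstacle. The only point that genuinely needs to be pinned down is that $\ker T$ is at most one-dimensional, and this reduces entirely to exhibiting a single solution of the ODE whose derivative at $x=-b$ is nonzero, as above.
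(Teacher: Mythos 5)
Your proposal is correct, and both halves would pass review; the difference from the paper lies in the mechanism for uniqueness. The paper's proof establishes linear dependence directly via the Wronskian: since $-u''+Vu=0$ has no first-order term, $W(u_1,u_2)=u_1u_2'-u_1'u_2$ is constant in $x$, and the Neumann conditions force $W(b)=u_1(b)u_2'(b)-u_1'(b)u_2(b)=0$, hence $W\equiv0$ and $u_2=c\,u_1$; applying this with $u_2(x)=u_1(-x)$ gives $c^2=1$ exactly as in your symmetry step, which is identical to the paper's. You instead prove uniqueness abstractly first, by rank--nullity on the boundary map $T(U)=\bigl(U'(-b),U'(b)\bigr)$ defined on the two-dimensional solution space, exhibiting the solution $U_0$ with $U_0(-b)=0$, $U_0'(-b)=1$ to show $\ker T$ is a proper subspace, hence at most one-dimensional; this is valid (your preliminary reductions --- that the resonance is determined by, and nontrivial on, $[-b,b]$, and that the solution space is two-dimensional for $V\in L^1$ in the Carath\'eodory sense --- are exactly the points that need checking, and you check them). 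What each approach buys: the Wronskian computation is a one-line argument that yields the dependence of \emph{any} two threshold resonances in a single stroke, whereas the paper's written proof states it only under the hypothesis $V(x)=V(-x)$ and thus leaves the lemma's first claim (uniqueness for general, possibly asymmetric $V$) implicit in the same computation; your decomposition --- uniqueness for arbitrary compactly supported $V$ first, the even/odd dichotomy as a corollary --- makes that first assertion fully explicit, at the mild cost of a slightly longer argument that replaces constancy of the Wronskian with rank counting plus one explicitly constructed solution.
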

\begin{proof}
   Let $V(x) = V(-x)$ and $u_1$ be a nontrivial solution of
$ H_V u_1 = 0$ with boundary conditions $u_1'(\pm b) = 0$.
By the symmetry of $V$, $u_2(x) = u_1(-x)$ is also a solution. However, their Wronskian vanishes and hence they are linearly dependent; $u_1(x) = c u_1(-x) = c^2 u_1(x)$ for $c \neq 0$. Hence  $c$ must be either $+1$ or $-1$. In other words,  $u_1$ is either symmetric or anti-symmetric.
\end{proof}

We now state our result on bifurcation from a threshold resonance at $E_\star=0$. In contrast to Theorem \ref{thm:bifurcations}, in this case the nonlinear bound state centerings, $x_{\rm L}$ and $x_{\rm R}$,  do not drift to spatial infinity as $E\to 0$.

\begin{thm}[Bifurcation from a threshold resonance under symmetry constraints]\label{thm:bifurcations-zero}
Let $V(x)$ denote an even and bounded potential. Suppose that $H_V=-\partial_x^2 + V(x)$ has a zero energy (threshold) resonance with corresponding mode $U_\star$:
\begin{align*}
H_V U_\star &= 0, \qquad 0<x<b,
\end{align*}
which satisfies the boundary conditions
(i) $U'_\star(0)=0$ and  $U_\star'(b)=0$ in the case where $U_\star$ is even, and 
  (ii) $U_\star(0)=0$ and $U_\star'(b)=0$ in the case where $U_\star$ is odd. 

Then,  there exists $\varepsilon_0>0$ such that for all $0<\varepsilon<\varepsilon_0$ the following holds:
\begin{enumerate}
\item There exists a solution $\big(E(\varepsilon),U(x,E(\varepsilon),\varepsilon)\big)$ of the nonlinear boundary value problem \eqref{eq:nonlinear-bvp} on $[-b,b]$, such that 
\[
E(\varepsilon):(0,\varepsilon_0)\to(-\infty,0)
\]
is analytic, and 
\begin{subequations}
\begin{align}
   E(\varepsilon) &= -\tfrac12 U_\star^2(b)\,\varepsilon + \mathcal O(\varepsilon^2), \label{eq:Evepsilon}\\
   U(x,E(\varepsilon),\varepsilon) &= U_\star(x) + \mathcal O_{L^\infty}(\varepsilon),\quad \lvert x \rvert <b.
\end{align}
\end{subequations}
The nonlinear bound state $U(x,E(\varepsilon),\varepsilon)$ inherits the symmetry type (even or odd)
of the scattering resonance mode $U_\star(x)$.

\item The solution in (1) of the boundary value problem \eqref{eq:nonlinear-bvp_full} can be extended to all $\mathbb R$ as a nonlinear bound state of NLS/GP defined as
$
\Psi(x,t) = e^{-iE(\varepsilon)t}\,\psi(x,\varepsilon)$, with:
\begin{equation}
\psi(x,\varepsilon) =
\begin{cases}
\sgn (U_\star(-b)) \cdot \mathcal S\big(x-x_{\rm L}(\varepsilon),E(\varepsilon)\big), & x < -b,\\
\sqrt{\varepsilon}\,\Big( U_\star(x) + \mathcal O(\varepsilon)\Big), & -b \le x \le b, \\[0.5em]
\sgn (U_\star(b)) \cdot \mathcal S\big(x-x_{\rm R}(\varepsilon),E(\varepsilon)\big), & x>b.
\end{cases}
\end{equation}
Here, $\mathcal S$ is the 1-soliton defined in Theorem~\ref{thm:1soliton}, and 
$x_{\rm R}(\varepsilon)=-x_{\rm L}(\varepsilon)$ is  chosen to ensure continuity of $\psi$ and $\psi'$ at $x=b$ (and hence at $x=-b$).

\item  The translation parameters,  $x_{\rm L}(\varepsilon)$ and $x_{\rm R}(\varepsilon)$,  have finite limits as $\varepsilon \downarrow 0$: 
\begin{subequations}\label{eq:xRL-zero}
\begin{align}
x_{\rm R}(\varepsilon) &= b + \frac{1}{U_\star^2(b)} 
 \left( \int_0^b U_\star^2(x)\,dx \;-\; \frac{2}{U_\star^2(b)} \int_0^b U_\star^4(x)\,dx \right) + \mathcal O(\varepsilon), \\
 x_{\rm L}(\varepsilon) &= - x_{\rm R}(\varepsilon).
\end{align}
\end{subequations}
\end{enumerate}
\end{thm}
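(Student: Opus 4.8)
The plan is to exploit the assumed even/odd symmetry to reduce the boundary value problem \eqref{eq:nonlinear-bvp_full} to a shooting problem on the half-interval $[0,b]$, and there to replace the singular radiation condition used in Theorem~\ref{thm:bifurcations} by the equivalent \emph{polynomial} endpoint condition $\mathfrak E[U](b)=0$. The essential difficulty, already flagged in the text, is that the map $F(\kappa,\varepsilon)$ of Theorem~\ref{thm:bifurcations} is singular at $\kappa_\star=0$ because of the factor $\sqrt{\kappa^2-\tfrac\varepsilon2 U^2}$; moreover, at $(\varepsilon,E)=(0,0)$ the condition $\mathfrak E[U](b)=(U')^2+\tfrac\varepsilon2 U^4+EU^2=0$ reduces to $U'(b)^2=0$, which is satisfied by the linear threshold mode at \emph{every} amplitude. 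Both degeneracies are resolved at once by (i) using the symmetry to fix the shooting data at $x=0$ in terms of the normalized mode $U_\star$, thereby removing the amplitude freedom, and (ii) keeping $\mathfrak E$ in its smooth polynomial form so that $E$ (rather than $\kappa$) becomes the regular unknown.

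Concretely, for the even case I would define $U(x;E,\varepsilon)$ as the solution of the initial value problem associated with \eqref{eq:nonlinear-bvp} on $[0,b]$ with $U(0)=U_\star(0)$, $U'(0)=0$ (in the odd case, $U(0)=0$, $U'(0)=U_\star'(0)$). Since the parameters $(E,\varepsilon)$ enter the equation analytically, the flow map is real-analytic in them, and hence so is $G(E,\varepsilon):=\mathfrak E[U(\cdot\,;E,\varepsilon)](b)$. Because $U_\star'(b)=0$ one has $G(0,0)=0$, and differentiating at $(0,0)$, where $U'(b)=0$ and $E=0$, gives $\partial_E G(0,0)=U_\star(b)^2$, which is nonzero since a nontrivial threshold mode with $U_\star'(b)=0$ cannot also vanish at $b$ (cf.\ Lemma~\ref{lem:evenorodd}). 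The implicit function theorem then furnishes a unique analytic branch $E(\varepsilon)$ with $E(0)=0$, and $\partial_\varepsilon G(0,0)=\tfrac12 U_\star(b)^4$ yields $E'(0)=-\partial_\varepsilon G/\partial_E G=-\tfrac12 U_\star(b)^2$, which is \eqref{eq:Evepsilon}. Reflecting $U$ evenly (resp.\ oddly) across $x=0$ produces a $C^1$ solution of \eqref{eq:nonlinear-bvp} on $[-b,b]$ of the claimed symmetry, since $V$ is even and the cubic equation is equivariant under $x\mapsto -x$, $u\mapsto\pm u$; this proves part (1).

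For part (2) I would glue the $1$-soliton of Theorem~\ref{thm:1soliton} to the inner solution exactly as in the proof of part (2) of Theorem~\ref{thm:bifurcations}: on $x>b$ set $\psi=\sgn(U_\star(b))\,\mathcal S\big(x-x_{\rm R},E(\varepsilon)\big)$ and choose $x_{\rm R}$ so that $\sqrt\varepsilon\,|U(b)|=\mathcal S\big(b-x_{\rm R},E\big)$. The new feature is that, since $E=\mathcal O(\varepsilon)$, the matching occurs near the \emph{crest} of the soliton rather than on its tail. Solvability is automatic: the endpoint relation $\mathfrak E[U](b)=0$ states exactly that $\kappa^2-\tfrac\varepsilon2 U(b)^2=U'(b)^2/U(b)^2\ge0$, i.e.\ $\sqrt\varepsilon\,|U(b)|\le\sqrt{-2E}=\max\mathcal S(\cdot,E)$, so a centering $x_{\rm R}$ exists; continuity of $\psi'$ then follows because both the inner and outer pieces satisfy $\mathfrak E=0$ and agree in modulus at $b$, forcing $\psi'(b^-)=\pm\psi'(b^+)$, with the flank (the sign of $b-x_{\rm R}$) fixed by $\tanh\!\big(\kappa(b-x_{\rm R})\big)=-U'(b)/\big(\kappa U(b)\big)$. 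The identity $x_{\rm L}=-x_{\rm R}$ is inherited from the $x\mapsto-x$ symmetry of $|\psi|$.

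The remaining computation, part (3), is where the bulk of the work lies and is the step I expect to be the main obstacle. From the exact relation $\tanh\!\big(\kappa(b-x_{\rm R})\big)=-U'(b)/\big(\kappa U(b)\big)$ together with $\kappa=\sqrt{-E}=\tfrac{1}{\sqrt2}|U_\star(b)|\sqrt\varepsilon+\mathcal O(\varepsilon^{3/2})$ and $U'(b)=\varepsilon\,U_1'(b)+\mathcal O(\varepsilon^2)$, where $U=U_\star+\varepsilon U_1+\cdots$, the right-hand side is $\mathcal O(\sqrt\varepsilon)$, so $\kappa(b-x_{\rm R})$ is small and $b-x_{\rm R}=-2U_1'(b)/U_\star(b)^3+\mathcal O(\varepsilon)$. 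Thus the only quantity left to determine is $U_1'(b)$. The first-order equation is $-U_1''+VU_1=-\tfrac12 U_\star(b)^2\,U_\star+U_\star^3$ on $[0,b]$, with $U_1(0)=U_1'(0)=0$ by the normalization; pairing it with $U_\star$ and applying Green's identity on $[0,b]$ (the boundary term at $0$ vanishes, while at $b$ it reduces to $-U_\star(b)U_1'(b)$ since $U_\star'(b)=0$) gives
\[
 -U_\star(b)\,U_1'(b)=\int_0^b\Big(U_\star^4-\tfrac12 U_\star(b)^2U_\star^2\Big)\,dx .
\]
Substituting the resulting $U_1'(b)$ into $b-x_{\rm R}=-2U_1'(b)/U_\star(b)^3$ yields \eqref{eq:xRL-zero}. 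The careful bookkeeping of the $\mathcal O(\varepsilon)$ errors needed to pass rigorously from the exact $\tanh$ relation to this finite limit is the principal technical point; everything else follows the template of Theorem~\ref{thm:bifurcations}.
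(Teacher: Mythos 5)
Your proposal is correct and follows essentially the same route as the paper's proof: a half-interval shooting problem with symmetry-fixed data at $x=0$, the implicit function theorem applied in $E$ to the polynomial endpoint functional $\mathfrak{E}[U](b)$ (with $\partial_E=U_\star^2(b)\neq0$ and $E'(0)=-\tfrac12 U_\star^2(b)$), soliton gluing via matching of amplitude and logarithmic derivative at $x=b$, and a Green's-identity evaluation of $(\partial_\varepsilon U)'(b)$ that yields the finite limit of $x_{\rm R}$ in \eqref{eq:xRL-zero}. The differences are cosmetic — you match amplitude before the derivative whereas the paper matches the logarithmic derivative first, and you normalize by $U_\star$'s data at $x=0$ rather than to $1$; indeed your sign $+\tfrac{\varepsilon}{2}U^4$ in the endpoint functional is the one consistent with \eqref{eq:nonlinear-bvp_BC} and with \eqref{eq:Evepsilon}, correcting an apparent sign typo in the paper's displayed map $F$.
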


\subsubsection{Proof of part 1 of Theorem~\ref{thm:bifurcations-zero}}

Let $U_\star$ denote a zero energy threshold resonance of $H_V$. By Lemma \ref{lem:evenorodd},  $U_\star$ is either even or odd. Define $U(x;E,\varepsilon)$ to be the unique solution 
 of $H_VU-\varepsilon U^3=EU$ on $0\le x\le b$ with initial condition at $x=0$
\begin{equation}
\begin{cases}
U(0;E,\varepsilon) =  1,\quad \partial_x U(0;E,\varepsilon) =  0, & \textrm{if $U_\star$ is odd}\\
 U(0;E,\varepsilon) =  0,\quad \partial_xU(0;E,\varepsilon) =  1, & \textrm{if $U_\star$ is even}
    \end{cases}
\end{equation}
By uniqueness, the corresponding solution $U(x;E,\varepsilon)$ is respectively odd or even.
 Consider the smooth map \[F : (\varepsilon,E) \mapsto U'(b; E, \varepsilon)^2 + E U(b; E, \varepsilon)^2 - \frac \varepsilon 2 U^4(b;E,\varepsilon)^4.\] 
If  $(E(\varepsilon),\varepsilon)$ is such that $F(E,\varepsilon) =0$,  then $U(x;E,\varepsilon)$  satisfies \eqref{eq:nonlinear-bvp_BC} at $x=b$. Further, since \eqref{eq:nonlinear-bvp_BC} is invariant under $x\mapsto-x$ and $U\mapsto -U$, it would then follow that $U(x,E,\varepsilon)$ satisfies this boundary condition  at $x=-b$ as well. Further, conjugate symmetry and uniqueness imply that the solution $(E(\varepsilon), U(x;E,\varepsilon))$ is real. 

Note that $U_\star = U(x;0,0)$ and $F(0,0) = 0$. Further, for $\varepsilon > 0$,
\begin{equation}
    \partial_E F(0,0) = U_\star(b)^2 \neq 0.
\end{equation}
Therefore, by the  implicit function theorem, there is a unique curve $(E(\varepsilon),\varepsilon)$ for small $|\varepsilon|$ that satisfies $F(E(\varepsilon),\varepsilon) = 0$.

 This proves part 1 of Theorem~\ref{thm:bifurcations-zero}.
\subsubsection{Proof of part 2 of Theorem~\ref{thm:bifurcations-zero}} Let $\varepsilon >0$. The pair $E(\varepsilon), \psi(x,\varepsilon)=\sqrt{\varepsilon} U(x,E(\varepsilon),\varepsilon$ solves NLS/GP on the interval $\lvert x \rvert <b$. We seek to extend this solution to one defined on all of $\mathbb{R}$, which vanishes as $\lvert x \rvert\to\infty$. We proceed as in Section~\ref{sec:part2}. For $x>b$, $\psi = \sgn(\psi(b)) \cdot \mathcal S(x-x_{\rm R};E)$, and we need to impose continuity of the solution and its first derivative at $x=b$. $x_{\rm L} = -x_{\rm R}$ will be obtained via symmetry.

We will choose $x_{\rm R}$ to satisfy continuity of the logarithmic derivative $\psi'/\psi$ across $x=b$. We will then use this $x_{\rm R}$ to verify continuity of $\psi$ and $\psi'$.

Since $ \psi(x,\varepsilon)=\sqrt{\varepsilon} U(x,E(\varepsilon),\varepsilon$, we have 
\begin{equation} \label{eq:equiv-log}
     \left(\frac{U'(b;E(\varepsilon),\varepsilon)}{U(b;E(\varepsilon),\varepsilon)} \right)^2 
    =  \left(\frac{\psi'(b;E(\varepsilon),\varepsilon)}{\psi(b;E(\varepsilon),\varepsilon)} \right)^2.
\end{equation}

Furthermore, for small $\varepsilon > 0$, the condition $F(E(\varepsilon),\varepsilon) = 0$ implies
\begin{equation}\label{eq:log-deriv}
     0 \;\leq\; \left(\frac{U'(b;E(\varepsilon),\varepsilon)}{U(b;E(\varepsilon),\varepsilon)} \right)^2
     = - E(\varepsilon) - \frac \varepsilon 2 U^2(b; E(\varepsilon), \varepsilon)
     \;<\; - E(\varepsilon).
\end{equation}

An explicit computation yields
\begin{equation}
    \frac{\mathcal S'(b-x_{\rm R};E(\varepsilon))}{\mathcal S(b-x_{\rm R};E(\varepsilon))} 
    = -\sqrt{|E(\varepsilon)|}\,\tanh\!\big( - \sqrt{|E(\varepsilon)|}\,( b - x_{\rm R} ) \big)
\end{equation}
which implies that
\begin{equation}
    \left\{ \frac{\mathcal S'(b-x_{\rm R};E(\varepsilon))}{\mathcal S(b-x_{\rm R};E(\varepsilon))} : x_{\rm R} \in \mathbb R\right\} = \left(-\sqrt{|E(\varepsilon)|} \, , \,  \sqrt{|E(\varepsilon)} \right).
\end{equation}
Furthermore, \eqref{eq:equiv-log} and \eqref{eq:log-deriv} give
\[ \left(\frac{\psi'(b;E(\varepsilon),\varepsilon)}{\psi(b;E(\varepsilon),\varepsilon)} \right) 
     \in \left(-\sqrt{|E(\varepsilon)|} \, , \,  \sqrt{|E(\varepsilon)} \right).
     \]
Hence
\[
\left(\frac{\psi'(b;E(\varepsilon),\varepsilon)}{\psi(b;E(\varepsilon),\varepsilon)} \right) 
     \in \left\{ \frac{\mathcal S'(b-x_{\rm R};E(\varepsilon))}{\mathcal S(b-x_{\rm R};E(\varepsilon))} : x_{\rm R} \in \mathbb R\right\} .
\]
Next,  select $x_{\rm R}$ to satisfy continuity of the logarithmic derivative,
\begin{equation}\label{eq:log-deriv2}
    \frac{\psi'(b;E(\varepsilon),\varepsilon)}{\psi(b;E(\varepsilon),\varepsilon)}  
    = \frac{\mathcal S'(b-x_{\rm R};E(\varepsilon))}{\mathcal S(b-x_{\rm R};E(\varepsilon))}  .
\end{equation}

Now, both $\phi(x) = \psi(x;E(\varepsilon),\varepsilon)$ and $\mathcal S(x-x_{\rm R},E(\varepsilon))$ satisfy
 \begin{equation}
     \phi'(b)^2 = - E \phi(b)^2 - \frac 12 \phi(b)^4.
 \end{equation}
So by \eqref{eq:log-deriv2},
     \begin{equation}
          \left(\frac{\psi'(b;E(\varepsilon),\varepsilon)}{\psi(b;E(\varepsilon),\varepsilon)} \right)^2 = - E(\varepsilon) - \frac 12 \psi^2(b;E(\varepsilon),\varepsilon) = -E(\varepsilon) - \frac 12 \mathcal S^2(b - x_{\rm R}; E(\varepsilon) ).
     \end{equation}
 This implies $\psi^2(b;E(\varepsilon),\varepsilon) = \mathcal S^2(b - x_{\rm R}; E(\varepsilon))$. We obtain continuity by taking $\psi(b;E(\varepsilon),\varepsilon) = \sgn( \psi(b)) \cdot \mathcal S(b - x_{\rm R}; E(\varepsilon))$ at $x=b$.  This together with
 \begin{equation}
     \frac{\psi'(b;E(\varepsilon),\varepsilon)}{\psi(b;E(\varepsilon),\varepsilon)} = \frac{\mathcal S'(b-x_{\rm R};E(\varepsilon),\varepsilon)}{\mathcal S(b-x_{\rm R};E(\varepsilon),\varepsilon)}
 \end{equation}
implies continuity of the derivative at $x=\pm b$: $\psi'(b;E(\varepsilon),\varepsilon) = \sgn(\psi(b))\cdot  \mathcal S'(b - x_{\rm R}; E(\varepsilon))$. Therefore, 
\begin{equation}
\psi(x;E(\varepsilon),\varepsilon) = \begin{cases}
    \sqrt{\varepsilon} \, U(x;E(\varepsilon),\varepsilon), & 0 \leq x \leq b \\ \sgn(\psi(b)) \cdot \mathcal S(x-x_{\rm R};E(\varepsilon)), & x > b\ .
\end{cases}
\end{equation}
$\psi(x;E(\varepsilon),\varepsilon$ can now be extended to $x<0$ to have the same symmetry as $U_\star$. This proves part 2 of Theorem~\ref{thm:bifurcations-zero}.
\subsubsection*{Proof of part 3 of Theorem~\ref{thm:bifurcations-zero}} 
Our goal is to obtain, for $\varepsilon\to0^+$, asymptotic expressions for the  translation parameters $x_{\rm R}(\varepsilon)$ and $x_{\rm L}(\varepsilon)$. Differentiation of 
 the the boundary value problem for $U$  and setting $\varepsilon$ equal to zero yields for:
$\partial_\varepsilon U\Big|_{\varepsilon=0}$:
\begin{align*}
H (\partial_\varepsilon U) &= (\partial_\varepsilon E) U_\star + U_\star^3, \\ (\partial_\varepsilon U)(0) &=(\partial_\varepsilon U)'(0)=0,\\
2 U_\star' \cdot (\partial_\varepsilon U) &= -(\partial_\varepsilon E) U_\star^2 - \frac 12 U_\star^4 \quad \textrm{at }x=b.
\end{align*}
Since $U_\star'(b) = 0$, we have
\[
 \partial_\varepsilon E(0) = - \frac 12 U_\star^2(b).
\]
By multiplying $H(\partial_\varepsilon U)$ by $U_\star$ and integrating by parts, we get the explicit boundary value
\begin{equation} \label{eq:partialU}
(\partial_\varepsilon U)'
(b;0,0) = \frac 12 U_\star(b) \int U_\star^2 \, dx - \frac{1}{U_\star(b)} \int_0^b U_\star^4(b) \, dx. 
\end{equation}
 Now we solve for $x_{\rm R}(\varepsilon)$ by continuity of logarithmic derivative:
 \begin{equation}
     \left(\frac{U'(b;E(\varepsilon),\varepsilon)}{U(b;E(\varepsilon),\varepsilon)} \right) = - \sqrt{-E(\varepsilon)} \tanh \left( \sqrt{|E(\varepsilon)|} (b - x_{\rm R}(\varepsilon) \right).
 \end{equation}
 At lowest order, this reads
 \begin{equation}
          \varepsilon \left(\frac{\partial_\varepsilon U'(b;0,0)}{U_\star(b)} \right) \approx -\sqrt{\frac \varepsilon 2} \left |U_\star(b) \right |\tanh \left( \sqrt{\frac \varepsilon 2} \left |U_\star(b) \right |(b - x_{\rm R}(\varepsilon) \right).
 \end{equation}
 Where the left-hand side simplifies by \eqref{eq:partialU} to give
 \begin{equation} \label{eq:tanh}
     \varepsilon \left( \frac 12 \int U_\star^2 \, dx - \frac{1}{U_\star^2(b)} \int_0^b U_\star^4(b) \, dx. \right) \approx -\sqrt{\frac \varepsilon 2} \left |U_\star(b) \right |\tanh \left( \sqrt{\frac \varepsilon 2} \left |U_\star(b) \right |(b - x_{\rm R}(\varepsilon) \right).
 \end{equation}

Let
\[
s \equiv \left(\frac 12  \int U_\star^2 \, dx - \frac{1}{U_\star(b)^2} \int_0^b U_\star^4(b) \, dx \right).
\]
Applying $\operatorname{arctanh}$ to both sides of \eqref{eq:tanh}, we get
\begin{equation}
    \operatorname{arctanh}\left(-\frac{\sqrt{2 \varepsilon} \, s}{|U_\star(b)|}\right) = \sqrt{\frac \varepsilon 2} \, |U_\star(b)| ( b - x_{\rm R}(\varepsilon) )
\end{equation}
or
\begin{equation}
      x_{\rm R}(\varepsilon)  \approx   b - \sqrt{\frac{2}{\varepsilon U_\star^2(b)}}\operatorname{arctanh}\left(-\frac{\sqrt{2 \varepsilon} \, s}{|U_\star(b)|}\right).
\end{equation}
Using the Taylor expansion $\operatorname{arctanh} z = z + z^3/3 + z^5/5 + \cdots$ with $z = -\frac{\sqrt{2 \varepsilon} \, s}{|U_\star(b)|} $, this becomes, when $|z| \, < 1$, 
\begin{equation}
    x_{\rm R}(\varepsilon) \approx b + \frac{2 s}{U_\star^2(b)} + \frac{4 \varepsilon s^3}{U_\star^4(b)} + \mathcal O\left( \varepsilon^2 \right).
\end{equation}
Then in the limit, the shift parameter converges to a constant:
\begin{equation}
    \lim_{\varepsilon \downarrow 0} x_{\rm R}(\varepsilon) = b \;+\; \frac{1}{U_\star^2(b)} \int_0^b U_\star^2(x)\,dx 
\;-\; \frac{2}{U_\star^4(b)} \int_0^b U_\star^4(x)\,dx
\end{equation}
and $x_{\rm L}$ is obtained via symmetry. This completes the proof of Theorem~\ref{thm:bifurcations-zero}. \\
\hfill \qed
\begin{rem}
We verify the limiting expressions for $x_{\rm L}(\varepsilon)$ and  $x_{\rm R}(\varepsilon)$, displayed in \eqref{eq:xRL-zero}, via numerical simulation for a square well with threshold resonance: $V(x) = -\frac{\pi^2}{4} \chi_{[-1,1]}(x)$. Using $U_\star(x) = \frac{2}{\pi}\sin(\pi x / 2)$, \eqref{eq:xRL-zero} reduces to  $x_{\rm R}(\varepsilon) \to 3/4$ and $x_{\rm L}(\varepsilon)=-x_{\rm R}(\varepsilon)$. The simulation gives the same result, as shown in Figure~\ref{fig:corroborate}. 
\end{rem}
\begin{figure}[!ht]
    \centering
    \includegraphics[width=0.5\linewidth]{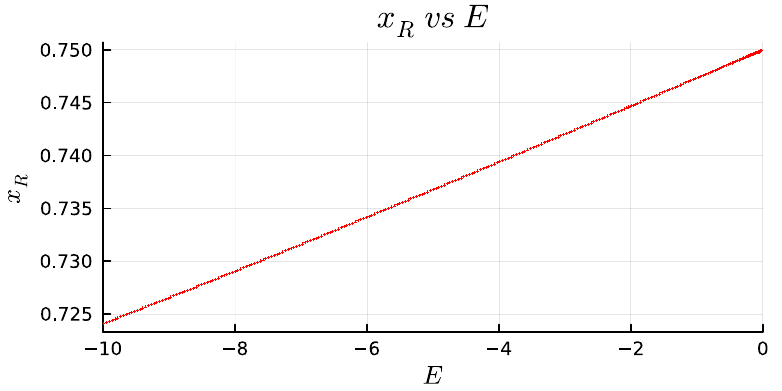}
    \caption{The calculation \eqref{eq:xRL-zero} for a square well potential $V(x) = -\frac{\pi^2}{4} \chi_{[-1,1]}(x)$ yields $x_{\rm R}(\varepsilon)\to 3/4$, for the odd nonlinear bound state  (Dirichlet condition at $x=0$) arising from an odd threshold resonance mode $U_\star(x)$. Here, we verify this numerically by computing the solutions on a fine grid of $E \in [-10,0)$ and plotting $x_{\rm R} $ vs $E$.
 }
    \label{fig:corroborate}
\end{figure}
    See also Figures~\ref{fig:zerobif1} and \ref{fig:zerobif3} where other (asymmetric) branches may bifurcate from the threshold resonance. These correspond to two transmission resonances gained when $V$ is perturbed.

\section{Bifurcation scenarios}\label{sec:bifurcation-character}

In this section we discuss the implications of Theorems~\ref{thm:bifurcations} and \ref{thm:bifurcations-zero},  in particular the  different bifurcation scenarios: 
(i) bifurcation arising from a bound state pole in subsection \ref{sec:bsbif},  (ii)
from a scattering resonance pole (anti-bound state) in subsection \ref{sec:srpbif}, and (iii) from a transmission resonance in subsection \ref{sec:trbif}. Each subsection contains a schematic figure illustrating one of these scenarios for a symmetric  
 square potential well:\\[3em]

\textbf{a) Left panel:} the locations of bound state poles ($w(k)=0$, $k\in i\mathbb R\setminus\{0\}$), scattering resonance poles ($w(k)=0$, $\Im k<0$), and transmission resonances $s_-(k)=0$, $k\in\mathbb C \setminus\{0\}$. \\[1em]

\textbf{b) Right panel:} $E\mapsto \mathcal{N}[\psi_E]$ bifurcation curve, and the plot of $x\mapsto\psi_E(x)$ corresponding to selected points along the curve.

\subsection{Bound state pole bifurcations}\label{sec:bsbif}
 Figure~\ref{fig:boundstate} describes the scenario where $H_V$ has a bound state pole at $k_\star=k_b=i\kappa_b$, with $\kappa_b>0$.  A curve of nonlinear bound states, $E\mapsto \psi_E$, emerges from  $(E,\mathcal{N})=(E_b,0)$, where $E_b=-\kappa_b^2$.  For $E$ less than and near $-\kappa_b^2$, we have $\psi_E(x)\approx \sqrt{E_b-E}\ U_\star(x)$.  This  recovers a special case of results in \cite{rose1988bound}. 

\begin{figure}[!ht]
  \centering
  \makebox[0.4\linewidth][c]{\textbf{Linear}}%
  \hspace{0.05\linewidth}%
  \makebox[0.4\linewidth][c]{\textbf{Nonlinear}}\\[2pt]

  \includegraphics[width=0.4\linewidth,page=2]{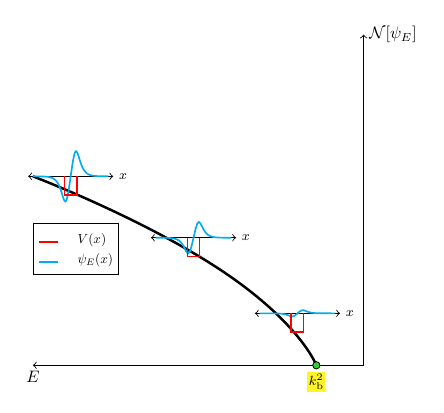}
  \hspace{0.05\linewidth}
  \includegraphics[width=0.4\linewidth,page=1]{Figures/bif_bound.pdf}

  \caption{
  Schematic of bifurcation due to a bound state pole of $H_V$ at $k=i\kappa_b$, $\kappa_b>0$. 
  Left panel: Highlighted is $k_b=i\kappa_b$ at which $w(k_b)=0$. The inset shows the corresponding anti-symmetric eigenstate $U_\star$ of $H_V$. 
  Right panel: Curve of nonlinear bound states bifurcating from $\mathcal{N}=0$ at energy $E=E_b=k_b^2=-\kappa_b^2<0$.
  }
  \label{fig:boundstate}
\end{figure}

\subsection{Scattering resonance pole bifurcations}\label{sec:srpbif}
Figure~\ref{fig:resonance} describes the bifurcation arising from a scattering resonance pole at $H_V$ at $k_\star=i\kappa_r$, with $\kappa_r<0$. The curve of nonlinear bound states, $E\mapsto \psi_E$, emerges from  $(E,\mathcal{N})=(E_r,\mathcal{N}_\star)$,   where 
$ E_r=-\kappa_r^2\quad {\rm and}\quad \mathcal{N_\star}= \lim_{E\to E_r}\mathcal N[\psi_E] = 2\ \mathcal N[S_{k_{\rm r}^2}] = 2\times 4 |k_{\rm r}| = 8\ |k_{\rm r}|$, which is strictly positive in contrast to the zero limiting $L^2$ norm for the bound state pole scenario.
For $E$ less than and near $-\kappa_r^2$,  $\psi_E(x)$ is approximately equal to an anti-symmetric superposition of distant 1-soliton  ($\sech$)  profiles. 
\begin{figure}[!ht]
  \centering
  \makebox[0.4\linewidth][c]{\textbf{Linear}}%
  \hspace{0.05\linewidth}%
  \makebox[0.4\linewidth][c]{\textbf{Nonlinear}}\\[2pt]

  \includegraphics[width=0.4\linewidth,page=2]{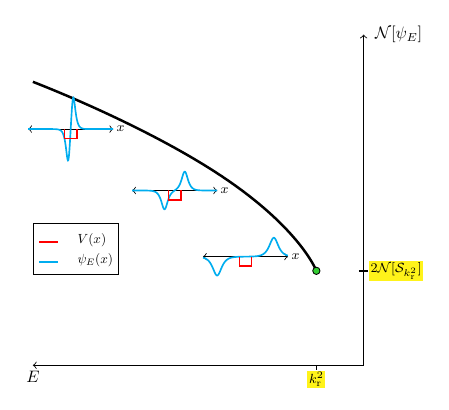}
  \hspace{0.05\linewidth}
  \includegraphics[width=0.4\linewidth,page=1]{Figures/bif_rez.pdf}

  \caption{
  Schematic showing the 
  bifurcation of a nonlinear bound state $(\psi_\star,k_{\rm r}^2)$ from a scattering resonance (anti-bound state) pole of $H_V$. 
  Left panel: Highlighted is a resonance (anti-bound state) pole $k_r=i\kappa_r$, with $\kappa_r<0$, at which $w(k_r)=0$. The inset shows the corresponding anti-symmetric scattering resonance mode $U_\star$ of $H_V$. 
  Right panel: Curve of nonlinear bound states bifurcating at a strictly positive $L^2$ threshold $\mathcal{N}=2\mathcal{N}[\mathcal S_{{k_r}^2}]>0$ at energy $E=k_r^2=-\kappa_r^2<0$.
  }
  \label{fig:resonance}
\end{figure}

\subsection{Transmission resonance bifurcations}\label{sec:trbif}

Figure~\ref{fig:transmission} describes the scenario of bifurcation from a right transmission resonance $(U_\star,k_{\rm t}^2)$. With $k_{\rm t} = i \kappa_{\rm t}$, $\kappa_{\rm t} < 0$.  Near the bifurcation point, the solution is given by a single $\sech$ profile, very distant from $x=0$. Hence $\mathcal N[\psi_E] \to  \mathcal N[S_{k_{\rm t}^2}] = 4 |k_{\rm t}|>0$. Note that the character of the threshold $L^2$ norm in this case is half that for the case of a scattering resonance pole bifurcation. 
\begin{figure}[!ht]
  \centering
  \makebox[0.4\linewidth][c]{\textbf{Linear}}%
  \hspace{0.05\linewidth}%
  \makebox[0.4\linewidth][c]{\textbf{Nonlinear}}\\[2pt]

  \includegraphics[width=0.4\linewidth,page=2]{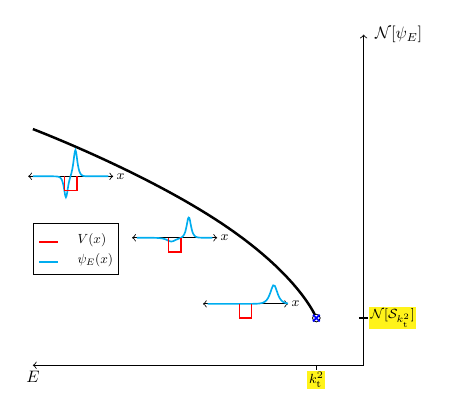}
  \hspace{0.05\linewidth}
  \includegraphics[width=0.4\linewidth,page=1]{Figures/bif_trans.pdf}

  \caption{
  Schematic showing the 
  bifurcation of nonlinear bound states from a transmission resonance  $(U_\star,k_{\rm t}^2)$. The asymmetry of the transmission resonance mode is manifested in the nonlinear bound states.
  }
  \label{fig:transmission}
\end{figure}

 Analogously, there is a bifurcation from a left-going transmission resonances,  $k_{2} = i \kappa_{2}$, $\kappa_{2} > 0$, which traces out the same $\mathcal{N}$ vs $E$ curve.

\begin{rem}[Observation of topology changes in bifurcation diagrams; Figure~\ref{fig:coalesce}.]\label{rem:toch}
    Suppose that as $V$ is deformed, two scattering resonance poles (anti-bound states) on the imaginary axis in $\mathbb C_-$, $k_1=i\kappa_1$  and $k_2=i\kappa_2$,  approach each other, coalesce ($k_1=k_2)$ and then depart from  the imaginary axis ($k_1$ and $k_2=-\bar{k_1}$).
    In this case,  two disconnected bifurcation branches of nonlinear bound states at first approach
     a common point in the $E<0$ and $\mathcal{N}>0$ quadrant. Beyond the coalescence point of $k_1$ and $k_2$, there remains a single smooth connected bifurcation branch. We illustrate this in Figure~\ref{fig:coalesce}.

\end{rem}

\begin{figure}[!ht]
  \centering
  \makebox[0.4\linewidth][c]{\textbf{Linear}}%
  \hspace{0.05\linewidth}%
  \makebox[0.4\linewidth][c]{\textbf{Nonlinear}}\\[2pt]

  \includegraphics[width=0.4\linewidth,page=1]{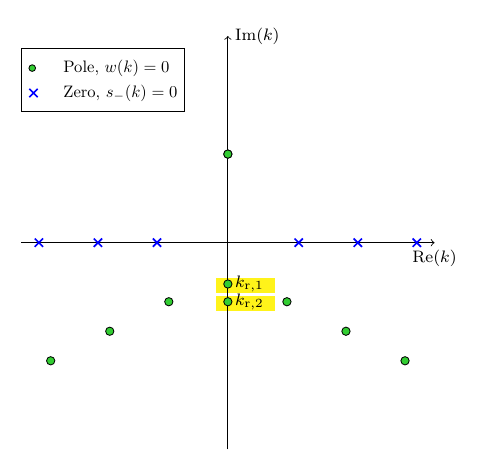}
  \hspace{0.05\linewidth}
  \includegraphics[width=0.4\linewidth,page=3]{Figures/bif_split.pdf}\\[6pt]

  \includegraphics[width=0.4\linewidth,page=2]{Figures/bif_split.pdf}
  \hspace{0.05\linewidth}
  \includegraphics[width=0.4\linewidth,page=4]{Figures/bif_split.pdf}

  \caption{
  Topological change in the bifurcation diagram due to coalescence of scattering resonance poles. 
  Branches before collision drift off to $\pm\infty$ (left), while after collision they reorganize (right).
  See Remark~\ref{rem:toch}.
  }
  \label{fig:coalesce}
\end{figure}

\subsection{Bifurcations from threshold (zero-energy) resonances}
\label{sec:zen-res}
In Theorem \ref{thm:bifurcations-zero}, for symmetric potentials $V(x)=V(-x)$ with a threshold resonance ($w(0)=0$), either one even or one odd branch bifurcates from the threshold.
Any additional threshold-induced branches must be asymmetric and therefore occur in reflected pairs $\psi_2(x)=\psi_1(-x)$.
Consequently, symmetric potentials yield an odd number of bifurcating branches.
Figures~\ref{fig:zerobif2}–\ref{fig:zerobif3} illustrate cases for symmetric wells, while Figure~\ref{fig:zerobif1} illustrates a case for an asymmetric well. These examples correspond to potentials of different depth:

\begin{enumerate}
    \item \textbf{Left}, $\alpha< \alpha_\star$, $H_V$ has a scattering resonance, $w(i\kappa_-) = 0$, with $\kappa_-<0$,

\item  \textbf{Center}, $\alpha = \alpha_\star$, $H_V$ has a (zero energy) threshold resonance $w(0) = s_-(0) =0$, and

\item \textbf{Right}, $\alpha > \alpha_\star$, 
 $H_V$ supports a new bound state, $w(i\kappa_+) = 0$, with $\kappa_+>0$.
\end{enumerate}

We consider the 2-parameter family of continuous, compactly supported, single-well potential:
\begin{equation}\label{eq:W}
    V(x;\alpha,\beta)
    = -\alpha\, \frac{W(x;\beta)}{\max_x W(x;\beta)}, \qquad
    W(x;\beta) = \exp\!\Big(-\tfrac{1}{4}(x-\beta)^{2}\Big)
    \!\left[\tfrac{1}{2}\bigl(1+\cos(\pi x)\bigr)\right],
    \quad \lvert x \rvert \le 1,
\end{equation}
and $V(x;\alpha,\beta)=0$ for $\lvert x \rvert >1$. 
The parameter $\alpha>0$ controls the well depth, while $\beta$ shifts the center of the Gaussian envelope, introducing asymmetry.  
For each fixed $\beta$, we pick the value $\alpha_\star(\beta)$ at which the linear operator $H_V=-\partial_x^2+V$ develops a zero-energy (threshold) resonance.

\textbf{Figure~\ref{fig:zerobif2} --- Symmetric $V(\; \cdot \;;\alpha,\beta) \in L^1_{\rm comp} \cap C^0$ single well:}\\
First, we set $\beta=0$, making $V$ even.  
The operator $H_V$ acts invariantly on even and odd subspaces.  
In this case, the transmission resonances occur in pairs $k_1 = \bar k_2$, and when $k_1, k_2 \in i \mathbb R$ they induce nonlinear bound state branches that satisfy the mirror relation $\psi_1(x)=-\psi_2(-x)$.  At threshold, a single symmetric branch bifurcates from $(E,\mathcal N)=(0,0)$.

\begin{figure}[!ht]
\centering

\begin{minipage}[t]{0.29\linewidth}\centering
  \includegraphics[width=\linewidth]{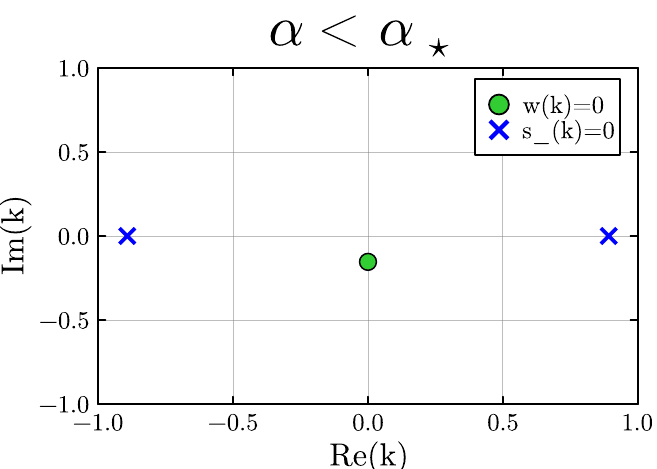}
\end{minipage}
\hspace{0.05\linewidth}
\begin{minipage}[t]{0.29\linewidth}\centering
  \includegraphics[width=\linewidth]{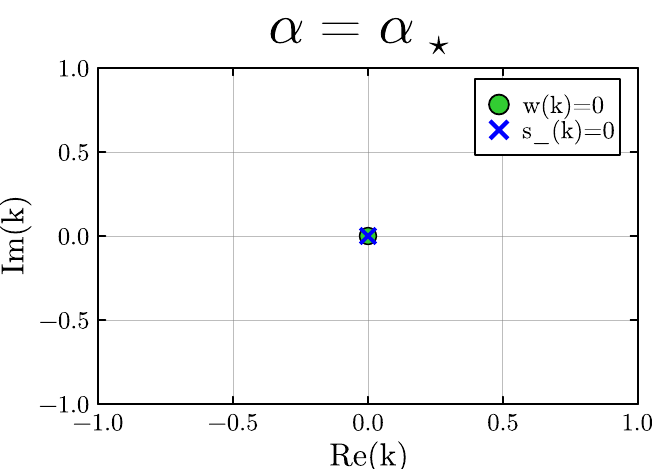}
\end{minipage}
\hspace{0.05\linewidth}
\begin{minipage}[t]{0.29\linewidth}\centering
  \includegraphics[width=\linewidth]{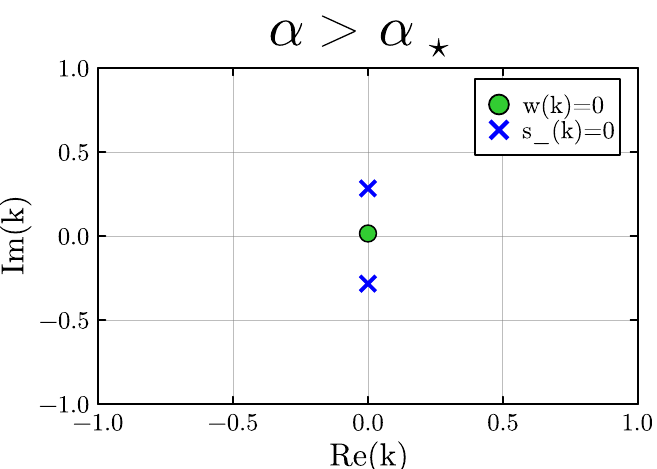}
\end{minipage}
\\[-0.0em]

\begin{minipage}[t]{0.29\linewidth}\centering
  \includegraphics[width=\linewidth]{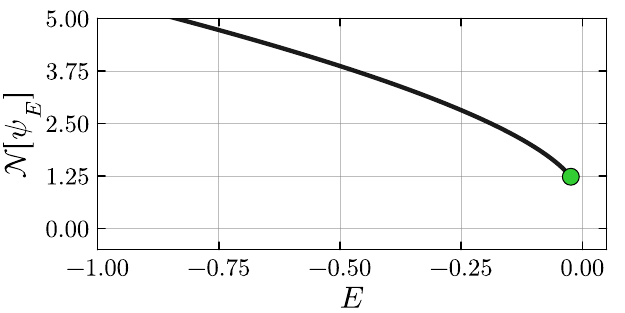}
\end{minipage}
\hspace{0.05\linewidth}
\begin{minipage}[t]{0.29\linewidth}\centering
  \includegraphics[width=\linewidth]{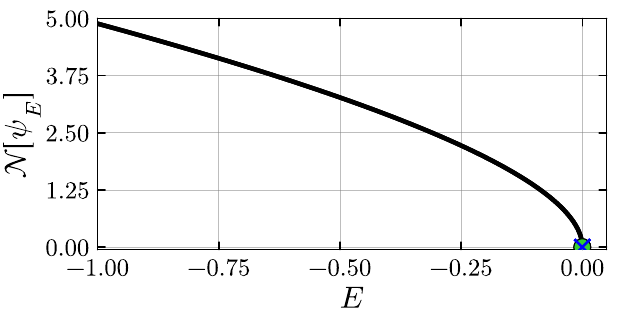}
\end{minipage}
\hspace{0.05\linewidth}
\begin{minipage}[t]{0.29\linewidth}\centering
  \includegraphics[width=\linewidth]{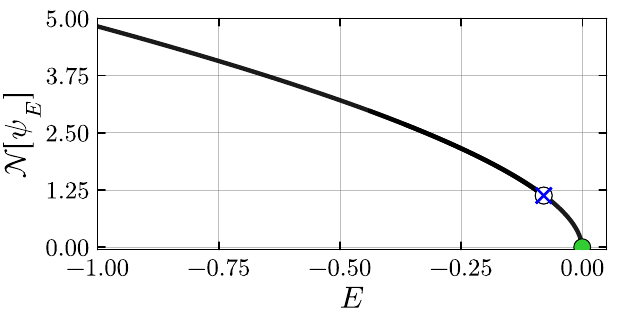}
\end{minipage}
\\[-0.2em]

\begin{minipage}[t]{0.29\linewidth}\centering
  \includegraphics[width=\linewidth]{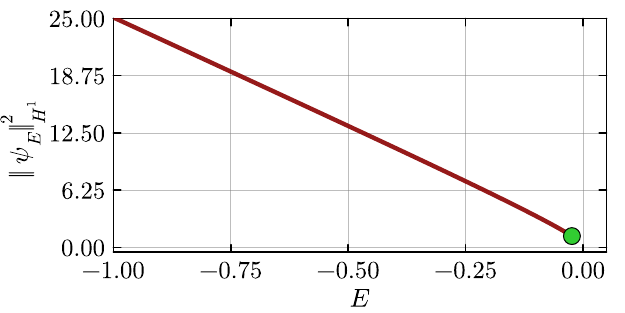}
\end{minipage}
\hspace{0.05\linewidth}
\begin{minipage}[t]{0.29\linewidth}\centering
  \includegraphics[width=\linewidth]{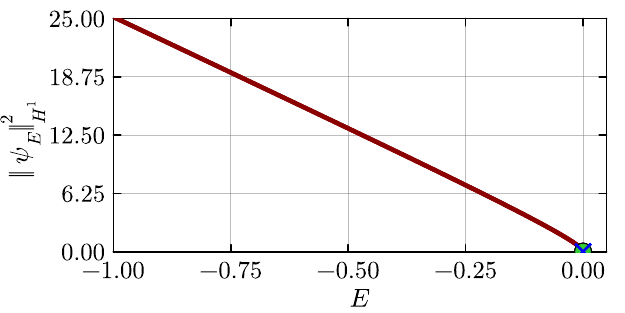}
\end{minipage}
\hspace{0.05\linewidth}
\begin{minipage}[t]{0.29\linewidth}\centering
  \includegraphics[width=\linewidth]{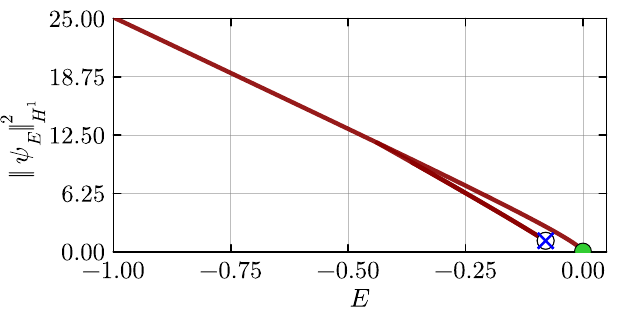}
\end{minipage}
\\[-0.2em]

\begin{minipage}[c]{0.29\linewidth}\centering
  \hspace*{0.4em}
  \includegraphics[width=\linewidth]{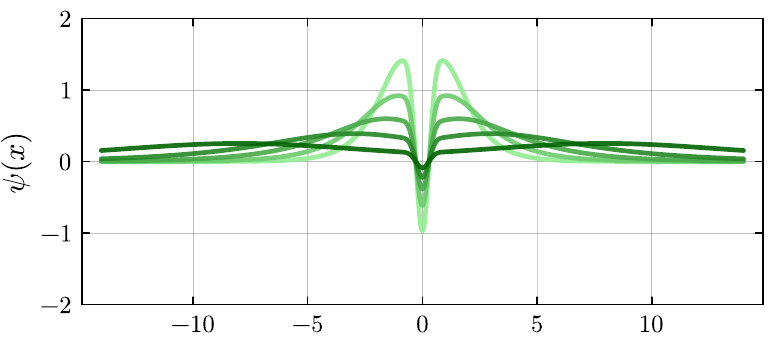}
\end{minipage}
\hspace{0.05\linewidth}
\begin{minipage}[c]{0.29\linewidth}\centering
  \hspace*{0.4em}
  \includegraphics[width=\linewidth]{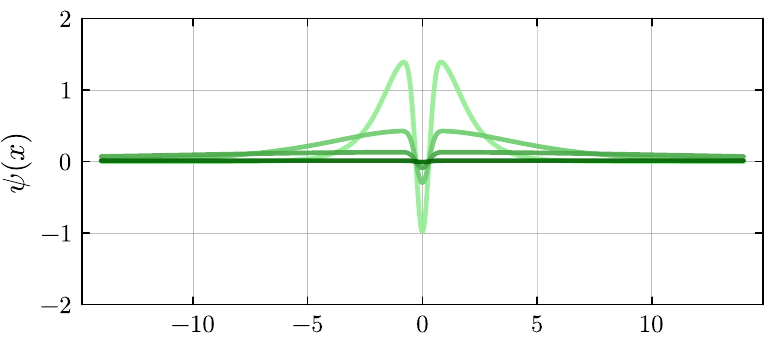}
\end{minipage}
\hspace{0.05\linewidth}
\begin{minipage}[c]{0.29\linewidth}\centering
  \includegraphics[width=\linewidth]{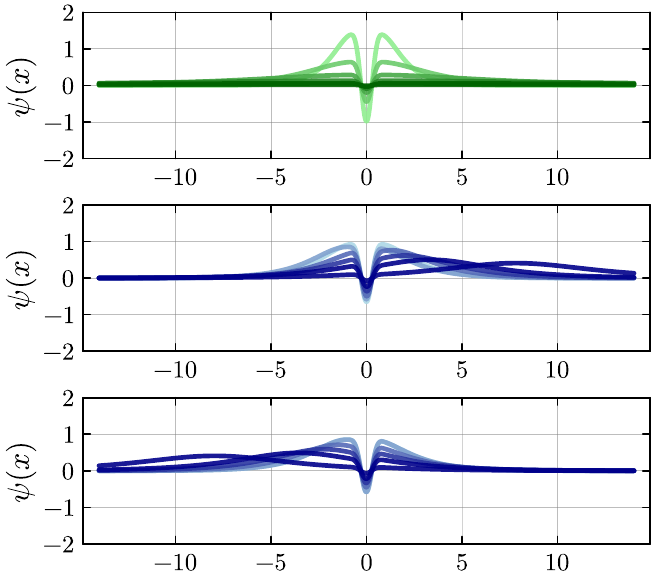}
\end{minipage}
\caption{
Symmetric potential $V(x;\alpha,\beta=0)$ from~\eqref{eq:W}.  
Top row: zeros of $w(k)$~\protect\greenCirc\ and of $s_-(k)$~\protect\blueXn.  
Rows~2–3: $\mathcal N$- and $H^1$-norm bifurcation diagrams.  
Bottom row: corresponding nonlinear profiles. The $H^1$-norm diagram is shown to distinguish branches that overlap in the $L^2$ picture.  
At $\alpha=\alpha_\star$, there is a single branch.
For $\alpha>\alpha_\star$, two transmission-induced asymmetric branches related by $\psi_1(x)=-\psi_2(-x)$ (of the same norms) merge into the symmetric branch near $E \approx -0.45$.  
Solution profiles $\psi(x;E)$ are plotted darker for $E$ closer to the bifurcation points~\protect\blueX\ and \protect\greenCirc.
}
\label{fig:zerobif2}
\end{figure}

\textbf{Figure~\ref{fig:zerobif3} --- Symmetric square well potential:}\\
Next, we consider the square well
\[
V(x;\alpha) = 
\begin{cases}
-\alpha, & \lvert x \rvert \le1,\\[4pt]
0, & \lvert x \rvert >1.
\end{cases}
\]
 When there is a threshold resonance ($\alpha=\alpha_\star$), three branches emerge from $(E,\mathcal N) = (0,0)$: one odd-symmetric and two reflected asymmetric branches $\psi_1(x)$ and $\psi_2(x) = -\psi_1(-x)$.

\begin{figure}[!ht]
\centering

\begin{minipage}[t]{0.28\linewidth}\centering
  \includegraphics[width=\linewidth]{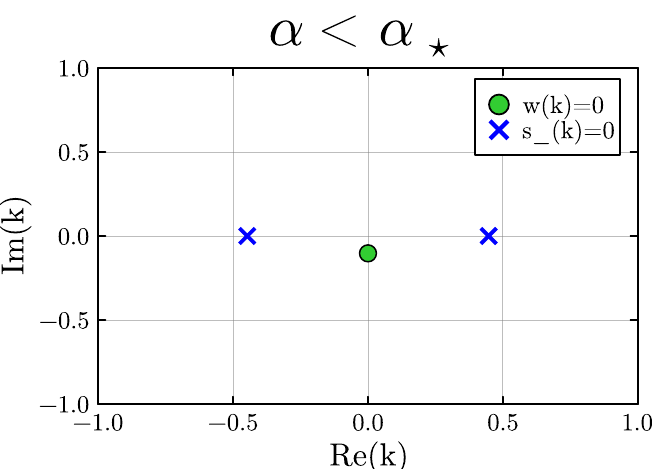}
\end{minipage}
\hspace{0.05\linewidth}
\begin{minipage}[t]{0.28\linewidth}\centering
  \includegraphics[width=\linewidth]{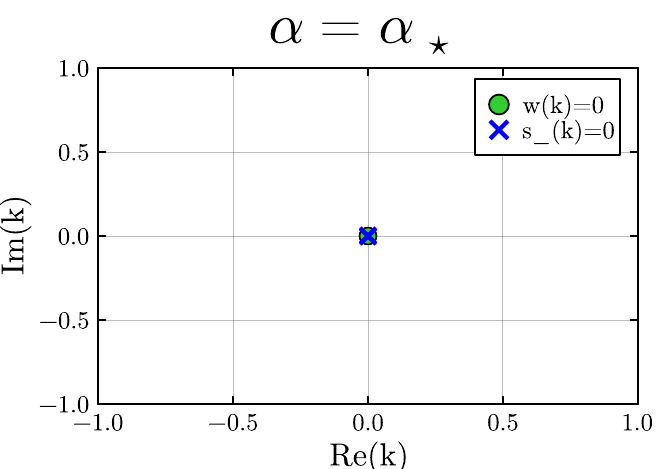}
\end{minipage}
\hspace{0.05\linewidth}
\begin{minipage}[t]{0.28\linewidth}\centering
  \includegraphics[width=\linewidth]{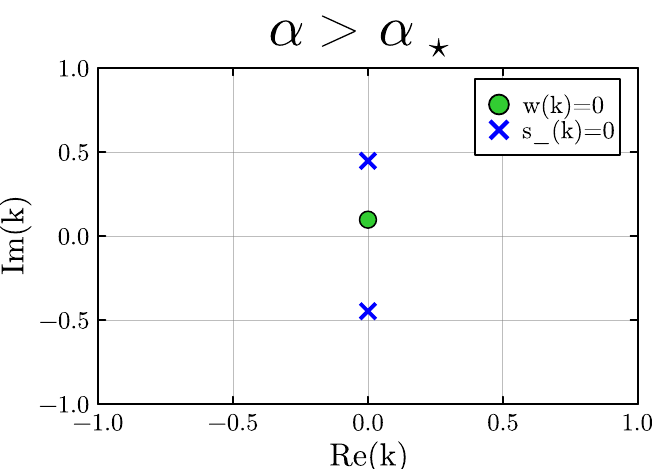}
\end{minipage}
\\[-0.0em]

\begin{minipage}[t]{0.28\linewidth}\centering
  \includegraphics[width=\linewidth]{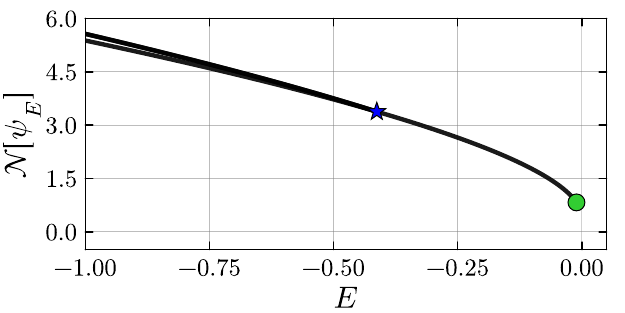}
\end{minipage}
\hspace{0.05\linewidth}
\begin{minipage}[t]{0.28\linewidth}\centering
  \includegraphics[width=\linewidth]{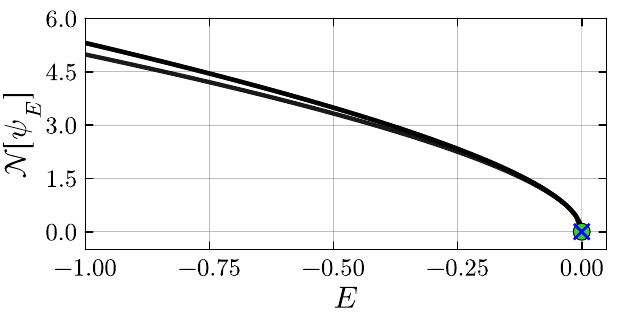}
\end{minipage}
\hspace{0.05\linewidth}
\begin{minipage}[t]{0.28\linewidth}\centering
  \includegraphics[width=\linewidth]{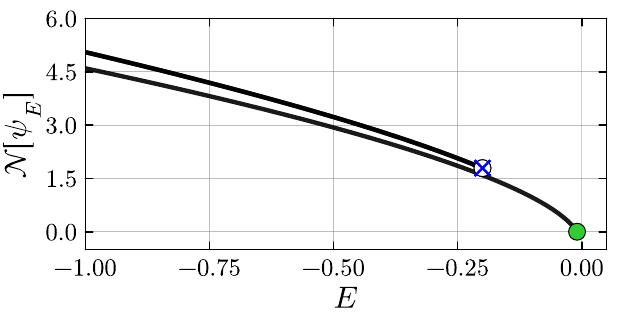}
\end{minipage}
\\[-0.2em]

\begin{minipage}[t]{0.28\linewidth}\centering
  \includegraphics[width=\linewidth]{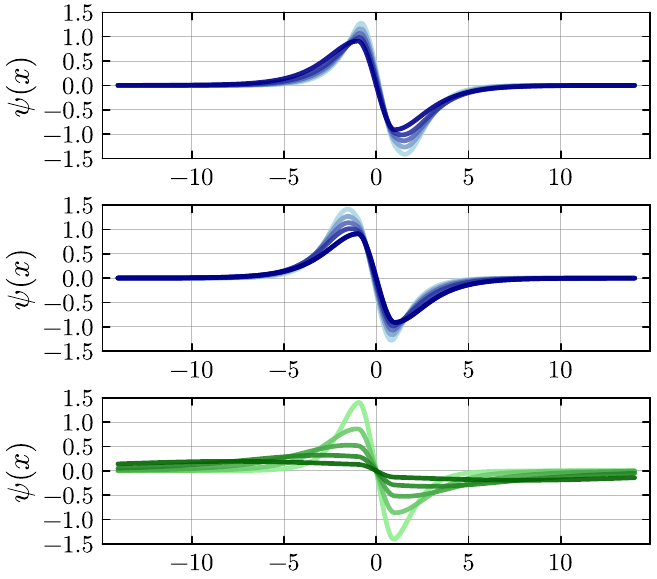}
\end{minipage}
\hspace{0.05\linewidth}
\begin{minipage}[t]{0.28\linewidth}\centering
  \includegraphics[width=\linewidth]{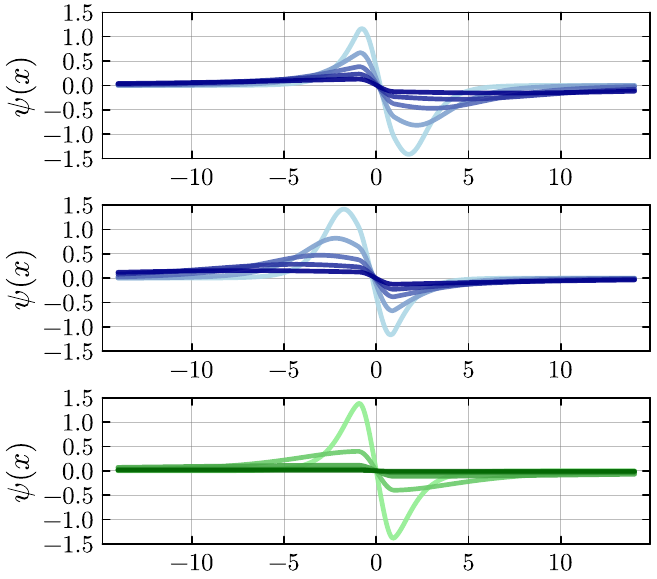}
\end{minipage}
\hspace{0.05\linewidth}
\begin{minipage}[t]{0.28\linewidth}\centering
  \includegraphics[width=\linewidth]{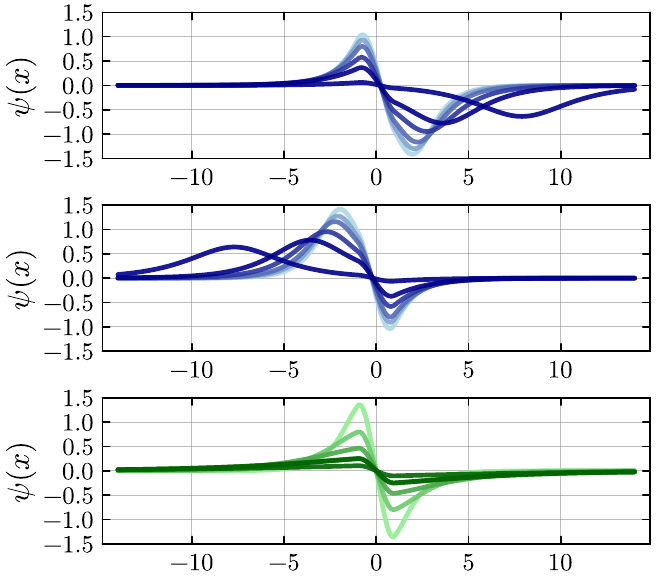}
\end{minipage}

\caption{
Symmetric square well potential $V(x)=-\alpha\chi_{[-1,1]}(x)$.  
Top row: zeros of $w(k)$~\protect\greenCirc\ and of $s_-(k)$~\protect\blueXn.  
Row~2: $\mathcal N$ bifurcation diagram.  
Rows~3–4: nonlinear profiles and associated bifurcation curves.  
At $\alpha=\alpha_\star$, three branches bifurcate from the zero-energy resonance: one odd-symmetric and two reflected asymmetric branches $\psi_1(x)$ and $\psi_2(-x)$.  
These persist for $\alpha<\alpha_\star$ as bifurcations at the discontinuity points of $V$ where $\psi_E'(\pm b)=0$, and for $\alpha>\alpha_\star$ as bifurcations from transmission resonances on the imaginary axis.  
Solution profiles $\psi(x;E)$ are plotted darker for $E$ closer to the bifurcation points~$\star$, \protect\blueX, and \protect\greenCirc\ (\emph{lighter near $E = -1.0$}).
}
\label{fig:zerobif3}
\end{figure}

\textbf{Figure~\ref{fig:zerobif1} --- Asymmetric $V(\; \cdot \; ;\alpha,\beta) \in L^1_{\rm comp} \cap C^0$ single well:}\\
Now, we fix $\beta=-11$ for $V$ in \eqref{eq:W} to make it asymmetric. As $\alpha$ increases through $\alpha_\star$, a scattering resonance and transmission resonance on $i \mathbb R$ collide at $k=0$ to form a zero-energy threshold resonance.  For $\alpha<\alpha_\star$, the upper branch originates from the scattering pole \greenCirc\ and the lower from the transmission zero \blueX;
for $\alpha>\alpha_\star$, these associations are interchanged.

\begin{figure}[!ht]
\centering

\begin{minipage}[t]{0.29\linewidth}\centering
  \includegraphics[width=\linewidth]{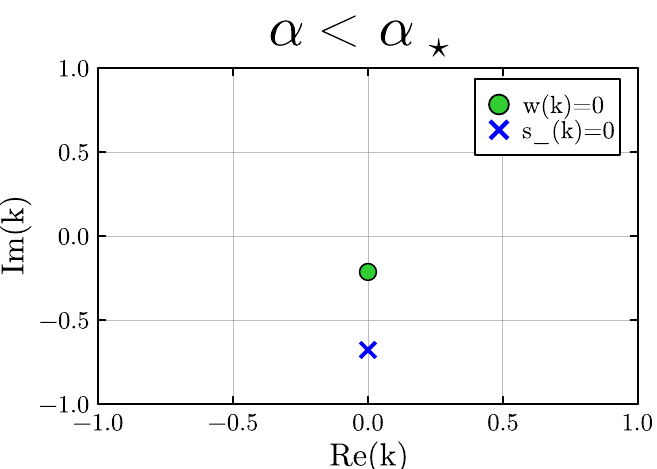}
\end{minipage}
\hspace{0.05\linewidth}
\begin{minipage}[t]{0.29\linewidth}\centering
  \includegraphics[width=\linewidth]{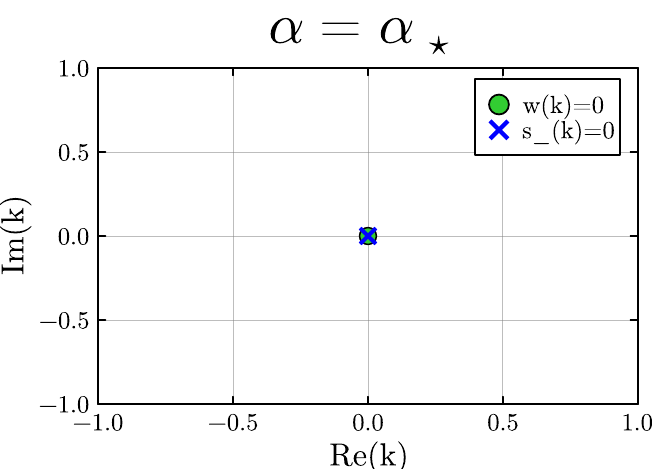}
\end{minipage}
\hspace{0.05\linewidth}
\begin{minipage}[t]{0.29\linewidth}\centering
  \includegraphics[width=\linewidth]{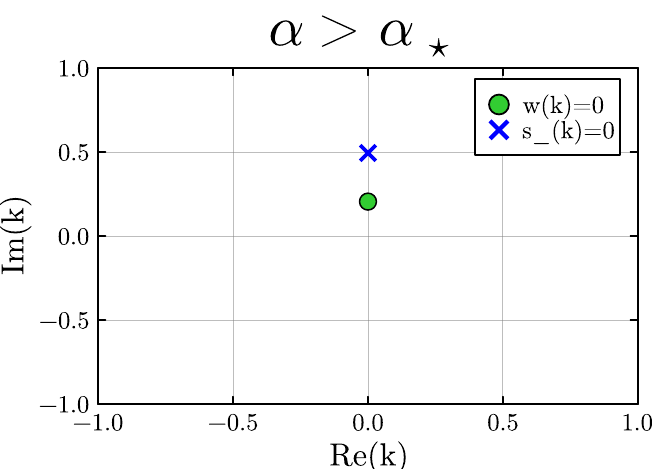}
\end{minipage}
\\[-0.0em]

\begin{minipage}[t]{0.29\linewidth}\centering
  \includegraphics[width=\linewidth]{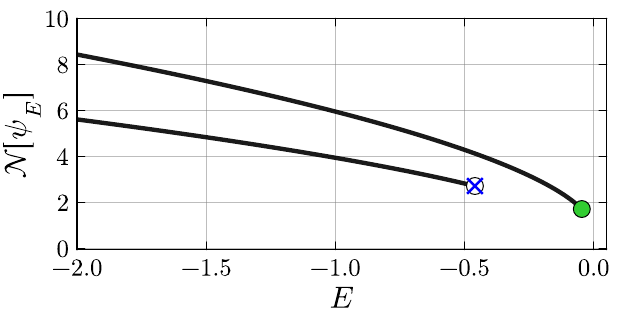}
\end{minipage}
\hspace{0.05\linewidth}
\begin{minipage}[t]{0.29\linewidth}\centering
  \includegraphics[width=\linewidth]{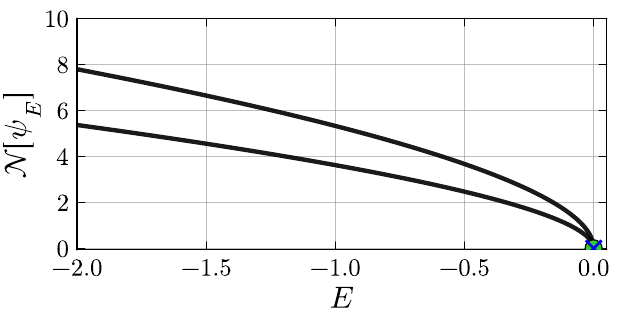}
\end{minipage}
\hspace{0.05\linewidth}
\begin{minipage}[t]{0.29\linewidth}\centering
  \includegraphics[width=\linewidth]{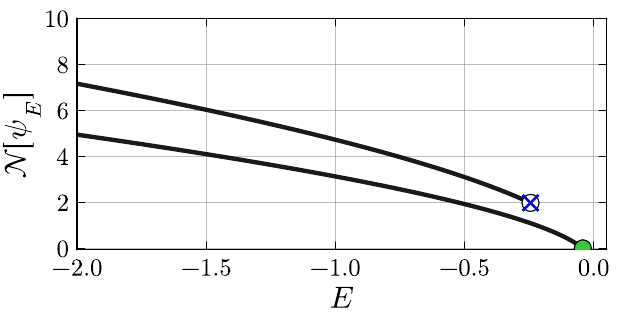}
\end{minipage}
\\[-0.15em]

\begin{minipage}[t]{0.29\linewidth}\centering
  \includegraphics[width=\linewidth]{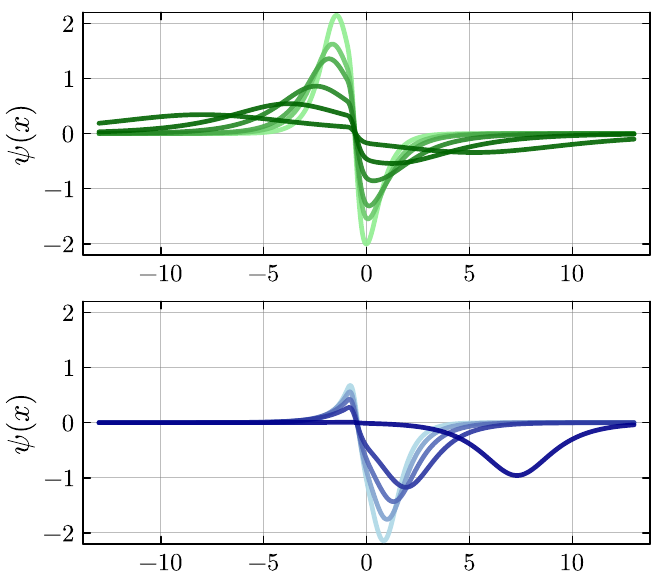}
\end{minipage}
\hspace{0.05\linewidth}
\begin{minipage}[t]{0.29\linewidth}\centering
  \includegraphics[width=\linewidth]{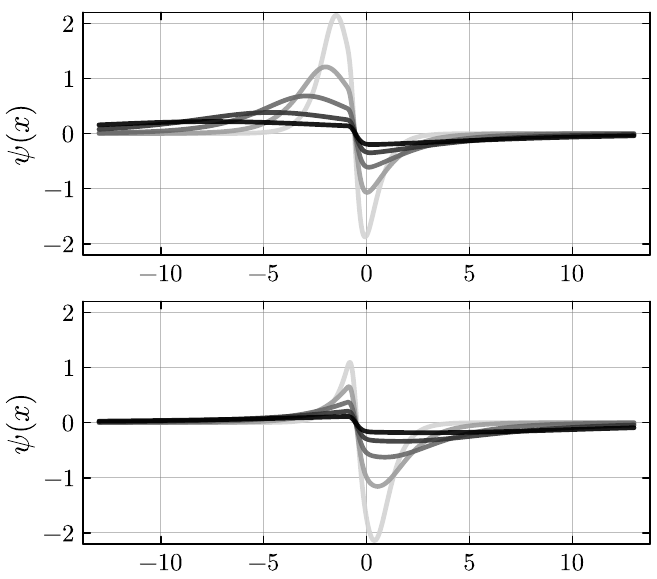}
\end{minipage}
\hspace{0.05\linewidth}
\begin{minipage}[t]{0.29\linewidth}\centering
  \includegraphics[width=\linewidth]{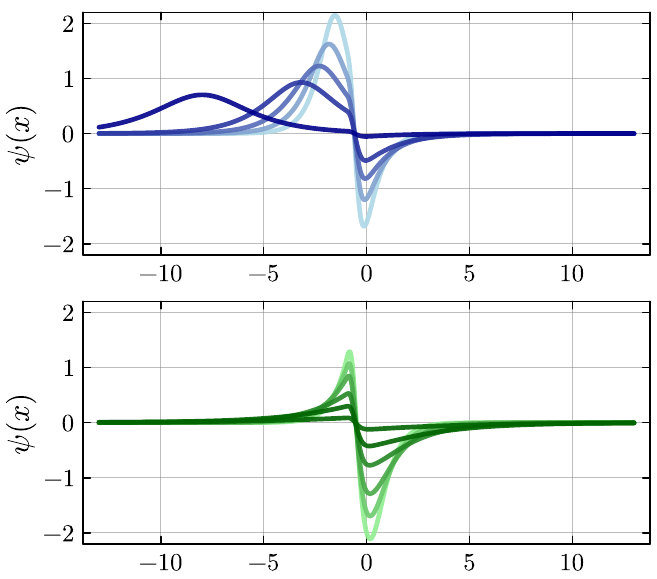}
\end{minipage}

\caption{
Asymmetric potential $V(x;\alpha,\beta=-11)$ from~\eqref{eq:W}.  
Row~1: scattering data—zeros of $w(k)$~\protect\greenCirc\ and of $s_-(k)$~\protect\blueXn.  
Row~2: nonlinear bifurcation diagram.  
Rows~3–4: nonlinear bound-state profiles.  
$H_V$ admits a threshold resonance at $\alpha=\alpha_\star=24.04031$.  
As $\alpha$ passes through $\alpha_\star$, the scattering and transmission resonances (\protect\greenCirc, \protect\blueX) exchange roles.  
Solution profiles $\psi(x;E)$ are plotted darker for $E$ closer to the bifurcation points~\protect\blueX\ and \protect\greenCirc\ (\emph{lighter near $E = -2.0$}).}

    \label{fig:zerobif1}
\end{figure}

\section{Discussion and future directions}\label{sec:future}
In this article we have demonstrated that nonlinear bound states of NLS/GP arise via bifurcation from scattering resonances and transmission resonances of the Schr\"odinger operator $H_V=-\partial_x^2+V$, a generalization of the bifurcation from bound states of $H_V$. Our study elucidates the physical heuristic that the self-consistent potential of NLS/GP, $V(x)- |\psi(x)|^2$ can, at some threshold with respect to the size of $\psi$, convert certain resonance states of $H_V$ that are merely $L^2(K)$, for $K$ compact,  into  nonlinear bound state solutions in $L^2(\mathbb R)$. We next mention some future directions of interest.

\begin{enumerate}[label=(\arabic*), leftmargin=*, labelwidth=0pt, labelsep=0.4em]
 \item \textbf{Dynamical Stability}:   In \cite{rose1988bound}, it was proved that the branch which bifurcates from a linear ground state is nonlinearly orbitally stable; see also \cite{weinstein1985modulational,weinstein1986lyapunov,grillakis1987stability,jackson2004geometric,kirr2008symmetry,weinstein-survey:15}. 
 Asymptotic stability and scattering were addressed in, for example,  \cite{sw1990,sw1992,pillet-wayne1997,sw2004}.  It is natural to consider the stability and instability properties of branches of states which arise from linear scattering and transmission resonances.

 \item \textbf{Non-compactly supported potentials:} Our gluing construction depends on the potential $V$ having compact support. Note that if $V$ decays exponentially then the scattering data functions  $w(k)$
 and $s_\pm(k)$ are analytic in a strip. We may then consider the bound state and scattering poles within this strip of analyticity.   Do analogous bifurcation results hold for sufficiently decaying, non-compactly supported, potentials? 
Is there a sharp decay rate on $V(x)$ for such results to hold? Variational results in \cite{kirr2018global} suggest that the drifting of solitons to infinity, which is a hallmark of our resonance-type bifurcations, cannot occur for potentials which decay sufficiently slowly.

\item {\bf Periodic or discrete NLS/GP}: Are there analogous phenomena in the NLS/GP, where the underlying linear potential is periodic plus a compactly supported defect potential \cite{dvw2015}?
And, the analogous question for equations of  discrete nonlinear Schr\"odinger type. 

 \item \textbf{Multiple bifurcations from a threshold resonance}: Give a full explanation of the possibility of multiple bifurcations of states from a threshold (zero energy) resonances. 
 Investigate the possible applicability of the bifurcation strategy used in the study of spectral band edge bifurcations
\cite{ilan-weinstein:10} for NLS with a periodic potential $V$ and for discrete NLS-type equations \cite{jenkinson:16,jenkinson:17}.

\item \textbf{Variational perspective.}  
Consider the ground states in $H^1_0(\mathbb R_+)$, defined as
\[
    \inf_{\substack{\mathcal N[u]=N, \\ u\in H^1_0(\mathbb R_+)}} \mathcal H^V[u]
\]
with Lagrange multiplier $E=E(N)$. Or, equivalently, the ground states in $H^1_{\rm odd}(\mathbb R)$ for symmetric $V(x) = V(-x)$ potentials. We expect families of constrained minimizers, similar to those displayed in Figures~\ref{fig:boundstate} and \ref{fig:resonance}. When $H_V$ has no bound state, the `ground state' curves terminate at a positive $\mathcal N$-threshold.  
We conjecture the existence of a critical $\mathcal N_{\mathrm{cr}}$ below which minimizing sequences lose precompactness via the ``drifting-to-infinity’’ mechanism~\cite{lions1984concentration}.  
Along our resonance-induced bifurcation curves, the nonlinear bound states similarly move farther from the support of $V$ as the bifurcation point is approached.  
Numerically (e.g., Figure~\ref{fig:coalesce}), the $L^2$ excitation threshold does not always coincide with a resonance bifurcation. The minimizing state could also plausibly jump from branch to branch as $\mathcal N$ varies.
\item 
\textbf{Time-dependent scattering dynamics}:  Do our new states participate, even as long time transients, in time-dependent scattering of NLS/GP. Are such states visible in 
 delta potential barrier potential scattering  \cite{holmer2007fast} and scattering for trapping potentials \cite{goodman2004strong}?
\item \textbf{Higher dimensions:}  Let  $V$ denote a sufficiently decaying potential well on $\mathbb R^3$. Then, for  $\alpha_\star>0$ sufficiently small $H_V = -\Delta + \alpha V$ has no point spectrum for  $\alpha < \alpha_\star$, and has point spectrum for  $\alpha > \alpha_\star$ \cite{lieb2001analysis}. 
In the latter regime,  nonlinear bound states bifurcate from the eigenstates of $H_V$, and the family arising from the ground state is dynamically orbitally stable \cite{rose1988bound} and asymptotically stable \cite{sw1990,sw1992,sw2004,weinstein-survey:15} for the NLS/GP dynamics. As suggested in \cite{rose1988bound}, does the focusing nonlinearity of NLS/GP induce stable nonlinear bound states from scattering resonances modes of $H_V$ for $0<\alpha<\alpha_\star$?
\end{enumerate}

\vfill
\printbibliography
\end{document}